\declaretheorem{theorem}
\newcounter{phase}[algorithm]
\newlength{\phaserulewidth}
\theoremstyle{plain}
\newtheorem{lemma}[theorem]{Lemma}
\newtheorem{fact}[theorem]{Fact}
\theoremstyle{definition}
\newtheorem{definition}{Definition}
\newcommand{\OPT}{\mathrm{OPT}}
\newcommand{\TSP}{\mathrm{TSP}}
\newcommand{\opt}{\mathrm{opt}}
\newcommand{\bd}{\mathrm{bd}}
\newcommand{\dist}{\mathrm{dist}}
\newcommand{\cost}{\mathrm{cost}}
\newcommand{\thr}{\Gamma}
\newcommand{\eps}{\epsilon}
\newcommand{\demand}{\mathrm{demand}}
\title{Unsplittable Euclidean Capacitated Vehicle Routing: \\ A $(2+\eps)$-Approximation Algorithm}
\author{
Fabrizio Grandoni\footnote{IDSIA, USI-SUPSI, Switzerland. E-mail: \texttt{fabrizio@idsia.ch}. Partially supported by the SNSF Grant 200021 200731 / 1.} 
\and
Claire Mathieu\footnote{CNRS, Paris, France. E-mail: \texttt{claire.mathieu@irif.fr}. Partially supported by the grant ANR-19-CE48-0016 from the French National Research Agency (ANR).}  
\and 
Hang Zhou\footnote{Ecole Polytechnique, Institut Polytechnique de Paris, France. E-mail: \texttt{hzhou@lix.polytechnique.fr}. Partially supported by the grant on ``Efficiency in Algorithms'' from Hi! PARIS.}
}
\date{}
\begin{document}

\maketitle

\begin{abstract}
In the unsplittable capacitated vehicle routing problem, we are given a metric space with a vertex called depot and a set of vertices called terminals. Each terminal is associated with a positive demand between 0 and 1. The goal is to find a minimum length collection of tours starting and ending at the depot such that the demand of each terminal is covered by a single tour (i.e., the demand cannot be split), and the total demand of the terminals in each tour does not exceed the capacity of 1.

Our main result is a polynomial-time $(2+\epsilon)$-approximation algorithm for this problem in the two-dimensional Euclidean plane, i.e., for the special case where the terminals and the depot are associated with points in the Euclidean plane and their distances are defined accordingly. This improves on recent work by Blauth, Traub, and Vygen [IPCO'21] and Friggstad, Mousavi, Rahgoshay, and Salavatipour [IPCO'22].
\end{abstract}

\thispagestyle{empty}
\setcounter{page}{0}

\newpage
\pagenumbering{arabic}

\setcounter{page}{1}

\section{Introduction}
In the \emph{unsplittable capacitated vehicle routing problem (unsplittable CVRP)}, we are given a metric space with a vertex $O$ called \emph{depot} and a set of $n$ vertices $V$ called \emph{terminals}. Each terminal $v\in V$ is associated with a positive demand $demand(v)\in (0,1]$. 
A feasible solution $S$ is a collection of tours starting and ending at the depot $O$ such that the demand of each terminal $v\in V$ is covered by a \emph{single} tour in $S$ (i.e.,\ the demand of $v$ cannot be split), and, for each tour $t\in S$, the total demand $demand(t)$ of the terminals covered by $t$ does not exceed the \emph{tour capacity} of 1.
Our goal is to find a feasible solution $S$ minimizing the total length of the tours $cost(S):=\sum_{t\in S}cost(t)$, where $cost(t)$ denotes the length of $t$, i.e., the overall weight of the edges of $t$. As an application, the reader might think about a set of identical vehicles, each with the same capacity, located in a depot; These vehicles have to deliver a set of items at different locations and return to the depot.

Originally introduced by Dantzig and Ramser in 1959~\cite{dantzig1959truck}, the unsplittable CVRP is arguably one of the most basic problems in Operations Research, and it generalizes famous problems in a natural way. For example, if the sum of the demands is at most $1$ (hence one tour is sufficient), the problem is equivalent to Traveling Salesman Problem (TSP); if all the terminals are placed at the same location, the problem is equivalent to Bin Packing.
Since Bin Packing is (3/2)-hard to approximate (i.e., a better than $3/2$ polynomial-time approximation algorithm does not exist unless $P=NP$)~\cite{williamson2011design}, the unsplittable CVRP is also  $(3/2)$-hard to approximate.

On general metrics, the first constant-factor approximation algorithm (namely a $3.5$ approximation) for the unsplittable CVRP is the \emph{iterated tour partitioning (ITP)} algorithm, which was proposed and analyzed in the 1980s by
Haimovich and Rinnooy~Kan~\cite{haimovich1985bounds} and Altinkemer and  Gavish~\cite{altinkemer1987heuristics}. The approximation ratio for the unsplittable CVRP was only recently slightly improved by Blauth, Traub, and Vygen~\cite{blauth2021improving}, and then further improved to roughly $3.194$ by Friggstad, Mousavi, Rahgoshay, and Salavatipour~\cite{friggstad2021improved}.

We study the unsplittable CVRP in the \emph{two-dimensional Euclidean plane}, called the \emph{unsplittable Euclidean CVRP}. Here the depot and terminals correspond to points in the Euclidean plane, and for any two vertices $u,v\in V\cup \{O\}$, the weight of the edges $\{u,v\}$ is given by the corresponding Euclidean distance. We remark that the $(3/2)$-hardness of approximation mentioned before extends to this case (indeed, even to the one-dimensional Euclidean case).
The approach in \cite{friggstad2021improved} directly implies a $2.694$ approximation for the unsplittable Euclidean CVRP, which is also the best known approximation factor for this problem. More precisely, the algorithm in \cite{friggstad2021improved} has approximation ratio strictly below $1.694+\alpha$, where $\alpha$ is the approximation ratio of a TSP algorithm in the considered metric. Since the Euclidean TSP admits a PTAS~\cite{arora1998polynomial,mitchell1999guillotine}, the algorithm in~\cite{friggstad2021improved} leads to a $2.694$ approximation.

Our main result is a $(2+\eps)$-approximation algorithm for the unsplittable Euclidean CVRP (\cref{thm:main}). This matches the best known approximation factor for the special case when all the demands are equal \cite{altinkemer1990heuristics}.

\begin{theorem}
\label{thm:main}
Let $\eps>0$.
There is a polynomial time $(2+\eps)$-approximation algorithm for the unsplittable Euclidean capacitated vehicle routing problem.
\end{theorem}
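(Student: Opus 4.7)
The plan is to extend the Altinkemer--Gavish $2$-approximation for equal demands to general unsplittable demands by combining it with Arora's Euclidean PTAS machinery. Recall the equal-demand baseline: if every terminal has demand $1/k$, one takes a TSP tour $T$ of cost $\tau$, cuts it into $n/k$ consecutive arcs of $k$ terminals each, and connects each arc endpoint to the depot. Averaging over the $k$ cyclic shifts of the cut positions, the mean additional depot-connection cost equals the radial lower bound $r := 2 \sum_{v} \demand(v)\, \dist(O,v)$, and since both $\tau$ and $r$ lower-bound $\OPT$, some shift achieves total cost $\tau + r \le 2\,\OPT$.

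First I would fix a threshold $\delta = \Theta(\eps)$ and split the terminals into \emph{large} (demand $> \delta$) and \emph{small} (demand $\le \delta$). For the small terminals I would compute a $(1+\eps)$-approximate TSP tour via Arora's PTAS and cut it greedily into arcs of cumulative demand at most one, with the starting cut parameter $s\in[0,1]$ taken along the cumulative-demand axis and randomized. Each small terminal $v$ becomes an arc endpoint on a set of shifts of measure $\demand(v)$, so averaging yields expected additional depot-connection cost at most $r$, while bin-packing waste per arc is at most $\delta$, bounding the number of arcs by $(1+O(\delta))\sum_v \demand(v)$. This yields a $(2+O(\eps))$-approximation for the small-demand sub-problem.

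Large terminals are more delicate because a tour may contain as few as one of them, so tour-splitting averaging is no longer tight. Here I would exploit the two-dimensional Euclidean structure: apply Arora's randomly shifted quadtree dissection with $O_\eps(1)$ portals per cell boundary, and solve a dynamic program whose state records (i) the pairing of portals by the partial tours crossing the cell, and (ii) for each such partial tour, the set of large terminals already assigned to it and its residual capacity rounded to a grid of width $\eps$. Since each tour contains at most $\lfloor 1/\delta\rfloor$ large terminals, a cell-boundary state encodes only $O_\eps(1)$ bits per crossing partial tour, and the DP runs in polynomial time.

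The main obstacle will be the interface between these two halves of the algorithm. One must prove a structure theorem asserting the existence of a near-optimal solution whose tours decompose into a large-terminal \emph{skeleton} compatible with the portal DP and a small-terminal \emph{continuation} compatible with the shifted tour-splitting, and show that the two partial solutions can be merged with only a $(1+\eps)$ overhead in total cost. Controlling how residual capacities are allocated across cells of the dissection while keeping the rounded DP state polynomial, and doing so without losing more than an $\eps$ fraction of $\OPT$ during the merge, is where the core technical difficulty lies.
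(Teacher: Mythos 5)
Your proposal identifies several of the right tools (Arora's quadtree DP, ITP-style shifted tour-splitting, a size threshold $\delta = \Theta(\eps)$ separating small from large demands), but the overall plan has a gap that is not a mere "technical difficulty" to be filled in -- it is the central missing idea. You propose to solve the small-terminal sub-instance and the large-terminal sub-instance separately and then merge. But if the two sub-solutions are tour-disjoint, the natural bound is $\bigl(\tau_{\text{small}} + r_{\text{small}}\bigr) + (1+\eps)\,\OPT_{\text{large}} \le (2+\eps)\OPT + (1+\eps)\OPT = (3+O(\eps))\OPT$, since all three terms are individually lower-bounded only by $\OPT$. Getting down to $2+\eps$ requires the small and large terminals to share tours, and you explicitly defer the required structure theorem ("one must prove\ldots") without proving it. That step is precisely where the approximation ratio is decided.

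The paper takes a genuinely different route to avoid this. Rather than routing small terminals as a separate ITP instance, it discretizes the plane into $O_\eps(1)$ Voronoi cells, computes a local TSP tour of the small terminals in each cell, cuts it into segments of demand roughly $\eps$, and replaces each segment by a single \emph{clustered} terminal of demand $\ge\eps$ at the cell center. The cost of serving the small terminals within their segments (plus connections to centers) is bounded by $(1+O(\eps))\OPT$ (\cref{lem:W}), and -- crucially, via the Assignment Lemma (\cref{lem:assignment}) -- the new instance consisting of big \emph{and} clustered terminals still has optimum at most $(1+O(\eps))\OPT$ (\cref{lem:coarse}). One then runs a single PTAS for a pure big-demand instance (\cref{lem:big-demands}), and the two costs add up to $(2+O(\eps))\OPT$. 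The big-demand PTAS itself sidesteps your DP-state blowup by snapping terminals to $O_\eps(1)$ centers and adaptively rounding demands so that there are only $O_\eps(1)$ terminal types (adaptive rounding à la Karmarkar--Karp) -- whereas your proposed state "the set of large terminals already assigned" is exponential as written, and rounding residual capacities to an $\eps$-grid accumulates uncontrolled error across the $\Theta(1/\eps)$ large terminals in a tour. Finally, the paper needs a second, separate case (\cref{sec:fewtours}) when $\OPT$ has few tours, because the $K\cdot 2D_{\max} \le \eps\cdot\OPT$ bound driving the clustering analysis fails otherwise; your proposal has no analogue of this case split, so even the parts you sketch would not go through uniformly.
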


As a by-product, we design a \emph{polynomial time approximation scheme (PTAS)} for the case when all demands are \emph{big} (\cref{thm:big-demands}). This matches the best known approximation factor for the special case when all the demands are big and equal \cite{haimovich1985bounds}.
\begin{definition}
Let $\eps>0$. We say that a terminal is \emph{big} (w.r.t.\ $\eps$) if its demand is at least $\eps$, and it is \emph{small} otherwise.
\end{definition}
\begin{theorem}
\label{thm:big-demands}
Let $\eps>0$. Assuming that all terminals are \emph{big} w.r.t.\ $\eps$, there is a polynomial time $(1+\eps)$-approximation algorithm for the unsplittable Euclidean capacitated vehicle routing problem.
\end{theorem}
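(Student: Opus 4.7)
The plan is to adapt Arora's quadtree-based PTAS framework for Euclidean TSP to the capacitated setting, exploiting the fact that the big-demands assumption forces every tour in a feasible solution to contain at most $K := \lceil 1/\eps \rceil$ terminals. Following Arora, I would first translate and rescale the points to integer coordinates in a bounding box of side $O(n/\eps)$ (at a $(1+\eps)$ multiplicative loss in cost), then build a randomly shifted quadtree dissection of depth $L = O(\log(n/\eps))$ and place $m = \Theta(L/\eps)$ portals on each side of each dissection square.

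The central technical step is a structure theorem asserting that there exists a $(1+\eps)$-approximate solution that is portal-respecting and in which each tour crosses each side of each dissection square at most $r = O(1/\eps)$ times. I would prove it by starting from an optimal solution and applying Arora's patching lemma independently on each over-crossed segment: the excess crossings are replaced by a walk along the segment so that at most $r$ crossings remain. The random shift guarantees that the expected total patching cost is bounded by $\eps \cdot \OPT$. A crucial feature is that patching reroutes only along boundary edges, so it does not modify the terminal-to-tour assignment; in particular, the capacity-$1$ constraint and the big-demand property are preserved by the patched solution.

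The dynamic program processes the dissection bottom-up. For each square $Q$, a state records a boundary \emph{configuration} consisting of: (i) the sequence of at most $4r$ portals used on $\partial Q$; (ii) a pairing of these portal-crossings indicating which pairs are joined inside $Q$ by a tour-fragment; (iii) a partition of the pairs into tour-groups indicating which fragments (and which fully interior tours) belong to the same global tour; and (iv) for each tour-group, its demand accumulated so far, rounded to the nearest multiple of $\eps^2$ so that the final capacity constraint can be enforced with only an $O(\eps)$ slack that is absorbed into the approximation ratio. The DP value at $Q$ is the minimum length of a set of arcs inside $Q$ that visits every terminal in $Q$ and realizes the stored configuration; an internal node is solved by enumerating compatible configurations on its four children. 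Because $r, K = O(1/\eps)$, $m = O((\log n)/\eps)$, and demands take $O(1/\eps^2)$ discretized values, the number of configurations per square is at most $(\log n)^{O(1/\eps^2)}$, which is polynomial in $n$ for fixed $\eps$.

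The main obstacle will be the coupled bookkeeping between geometry and demands: the DP must simultaneously enforce that tour-fragments in different squares glue into a single closed cycle \emph{and} that the total demand of each resulting cycle is at most $1$. The big-demands hypothesis is what tames this coupling, because it caps both the number of terminals per tour and the number of fragments of any single tour along $\partial Q$ by $K = O(1/\eps)$, keeping the configuration size polylogarithmic in $n$; without it, a single tour could interact with a boundary in combinatorially many ways and the state space would blow up. Derandomizing the random shift by trying all $O((n/\eps)^2)$ shifts yields a deterministic polynomial-time algorithm, completing the proof of \cref{thm:big-demands}.
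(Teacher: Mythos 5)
Your approach is fundamentally different from the paper's. The paper proves \cref{thm:big-demands} by first reducing to the bounded-distance case (\cref{lem:bounded-distance}), then snapping each terminal to the nearest of $O_\eps(1)$ polar-grid centers, applying Karmarkar--Karp style adaptive rounding to the demands at each center, and finally solving the resulting $O_\eps(1)$-sized instance \emph{exactly by exhaustive search}: since positions, rounded demands, and tour sizes are all constant, there are only $O_\eps(1)$ distinct tour types. No quadtree, no portals, no dynamic program.

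There is a genuine gap in your argument. Your structure theorem bounds the number of crossings of a dissection square boundary \emph{per tour} by $r=O(1/\eps)$, and your configuration $(i)$ then assumes ``at most $4r$ portals used on $\partial Q$.'' But the DP state must record the crossings of \emph{all} tours through $\partial Q$, and with big demands the number of tours in $\OPT$ is $\Theta(n)$ (each tour covers at most $1/\eps$ terminals, so there are at least $\eps n$ tours). In particular, every dissection square containing the depot is crossed by \emph{every} tour, so the total crossing count there is $\Theta(n)$, not $O(1/\eps)$. Patching cannot reduce this: merging crossings across different tours would fuse tours and violate capacity. Consequently the number of boundary configurations at depot-containing squares is $m^{\Theta(n)}$, not $(\log n)^{O(1/\eps^2)}$, and the DP is not polynomial. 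Asano et al.\ (and the paper's \cref{lem:Arora-modified}) make the quadtree approach work precisely under the \emph{extra} hypothesis that $\OPT$ has $O_\eps(1)$ tours, which is not implied by the big-demands hypothesis. The paper sidesteps this issue entirely via the bounded-distance reduction plus discretization to a constant number of Voronoi centers, which collapses the instance to constant size before any routing is done; you would need a comparable idea to repair the argument.
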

Notice that the special case with big terminals only remains NP-hard by a reduction from 3-Partition.\footnote{Recall that in the 3-Partition problem, 
we are given as input positive integers $a_1,\dots,a_{3n},b$, such that $b/4<a_i<b/2$ for all $i$, and such that $\sum_{i=1}^{3n}a_i=nb$.
We must decide whether there exists a partition of $\{1,\dots,3n\}$ into $n$ sets $T_j$ such that $\sum_{i\in T_j} a_i=b$ for all $j=1,\dots,n$.
The 3-Partition problem is NP-complete~\cite{williamson2011design}.}
See Table \ref{tab:unsplittable} for a comparison of our results with the closely related previous work. 

\begin{table}[H]
\centering
\renewcommand{\arraystretch}{1.4}
\begin{tabular}{ |l|p{4cm}|p{7cm}|} 
 \hline
  & equal demands & unequal demands\\\hline
arbitrary demands &           $2+\eps$~\cite{altinkemer1990heuristics} 
& 
2.694~\cite{friggstad2021improved} \newline 
{\bf $\bm{2+\eps}$ [this work: \cref{thm:main}]} 
\newline
$(3/2)$-hardness [folklore]
\\\hline
big demands &  $1+\eps$~\cite{haimovich1985bounds} & 2.694~\cite{friggstad2021improved}
\newline 
{\bf $\bm{1+\eps}$ [this work: \cref{thm:big-demands}}]
\newline
NP-hard [folklore]
\\\hline
\end{tabular}
\caption{
Approximation ratios for the Euclidean CVRP. 
}

\label{tab:unsplittable}
\end{table}

\subsection{Related Work}
\label{sec:related}
\paragraph{Equal Demands.}
The special case when all the demands are equal is called the \emph{unit demand} version of the Euclidean CVRP. 
Let the (equal) demand be $1/k$ where $k$ is a positive integer.
For general $k$, the best polynomial-time approximation factor is  $2+\eps$~\cite{altinkemer1990heuristics}.
For small values of $k$, PTASs are known. 
Haimovich and Rinnooy Kan~\cite{haimovich1985bounds} described a PTAS when $k$ is constant. 
Asano et al.~\cite{asano1997covering}, extending techniques in~\cite{haimovich1985bounds}, obtained a PTAS for $k=O(\log n/\log\log n)$. This was improved to $k\leq 2^{\log^{f(\eps)}(n)}$ for some function $f(\eps)$ depending on $\eps$ by
Adamaszek, Czumaj, and Lingas~\cite{adamaszek2010ptas}.
For higher dimensional Euclidean metrics, Khachay and Dubinin~\cite{khachay2016ptas} gave a PTAS for fixed dimension $\ell$ and $k=O(\log^{\frac{1}{\ell}}(n))$.

\paragraph{Relation with Heuristics in Practice.}
From a more practical perspective, we note that the standard setting in which the CVRP arises consists of real-life road networks. Such inputs are related to two theoretical models: Euclidean metrics (because distances between points are not exactly equal to but related to Euclidean distances) and planar graph metrics (because the road network is not exactly a planar graph but is  related to planar embedded graphs), thus those are the two theoretical settings which are most relevant in practice. This paper deals with the Euclidean CVRP; we leave the complementary planar graph CVRP for future work. 

Many algorithms use a 2-phase \emph{route-first-cluster-second} method, the ITP algorithm being the best-known one.
However, the reverse \emph{cluster-first-route-second} method is more common in practice, e.g., used by the famous \emph{OR-tools} developed by Google~\cite{ortools}.\footnote{See Lines 79--88 of the code: \url{https://github.com/google/or-tools/blob/stable/ortools/constraint_solver/routing.h}.}
As described in~\cite{cordeau2007vehicle},
\begin{quote}
\emph{In a \emph{cluster-first-route-second} method, customers are first grouped into clusters and the routes are then determined by suitably sequencing customers within each cluster. Different techniques have been proposed for the clustering phase, while the routing phase amounts to solving a TSP.}
\end{quote}
The \emph{cluster-first-route-second} method, although widely used in practice, does not have theoretical guarantees in terms of approximation factor except in special cases.
For example, Blauth, Traub, and Vygen~\cite{blauth2021improving} considered the worst-case instances for the ITP algorithm in metric spaces, and designed a \emph{cluster-first-route-second} algorithm with a theoretical guarantee for those worst-case instances.
In our work, we design another kind of \emph{cluster-first-route-second} algorithm that achieves a theoretical guarantee for general instances in Euclidean metrics. 
Indeed, existing heuristics of that method construct clusters of size at most equal to the tour capacity, so their routing creates an intra-cluster route for each cluster. Our main idea is to  construct finer clusters over the small terminals (of total demand roughly $\eps$). We also have an internal routing  to connect the terminals within each cluster into an intra-cluster route. In addition, we have an external routing phase so that each route is obtained by combining intra-cluster routes and big terminals. 

\section{Overview of our Techniques}\label{sec:overview}

Our approach uses a new way to cluster small terminals. In addition to that, we combine several ideas from the literature in a non-trivial way, including techniques developed in the framework of Euclidean TSP and Bin Packing.  

\paragraph{Big Terminals Only (\cref{sec:big}).}
Let us first consider the case of big terminals only (i.e., all the demands are at least some constant $\eps>0$). 

In the related Bin Packing problem, it is not hard to solve the problem exactly when (1) all items are big, i.e., their sizes are at least some constant $\eps>0$, and (2) there is a constant number of distinct item sizes. Exploiting the adaptive rounding technique, Karmarkar and Karp~\cite{karmarkar1982efficient} obtained a PTAS in the special case where all items are big (but the number of distinct sizes is arbitrary). 

Our PTAS for the case of big terminals (Theorem \ref{thm:big-demands}) is inspired by that approach. We start by observing that the problem can be solved exactly when (1) all demands are big, (2) there is a constant number of distinct demands, and in addition, (3) the terminals are placed at a constant number of distinct locations. Similarly to Bin Packing, up to losing a $1+\eps$ factor in the approximation, with adaptive rounding we get rid of the assumption (2). To get rid of the assumption (3), it is natural to partition the instance into subinstances of \emph{bounded distance} (\cref{def:bounded-dist,lem:bounded-distance}) and then discretize the Euclidean plane, similarly to the approach from~\cite{adamaszek2010ptas}. This leads to \cref{thm:big-demands}.

\paragraph{General Terminals (\cref{sec:general}).}
Now we consider the general case of big and small terminals together. 
To show \cref{thm:main}, we distinguish the following two cases.

\emph{Case 1: When the optimum solution $\OPT$ has sufficient many tours (\cref{sec:manytours}).} As before, we assume bounded distance w.l.o.g. The main novelty of our approach is to \emph{cluster} small terminals into groups with total demand roughly $\eps$ to reduce to the special case with big terminals only. The clustering process should put together small terminals that are near one another in the plane: to that end, we use a Voronoi cell decomposition of the plane, compute a traveling salesman tour on the small terminals in each cell, and then apply iterated tour partitioning with capacity $\epsilon$ to each such tour, with a virtual depot located at the center of the cell. 
For the analysis, thanks to the bounded distance and to the degree of discretization into cells,  we prove (\cref{lem:W}) that the cost of clustering small terminals is at most $(1+\eps)\cdot\opt$, where $\opt$ is the optimal cost.
Besides, thanks to the ``Assignment Lemma''  from~\cite{becker2019framework}, we prove (\cref{lem:coarse}) that an optimal solution covering the \emph{clustered} terminals and the big terminals has cost at most $(1+\eps)\cdot \opt$.
Since the clustered terminals behave the same as the big terminals, we apply our PTAS for big terminals only to obtain a solution of cost at most $(1+2\eps)\cdot \opt$ covering the clustered terminals and the big terminals.
Together with the cost of clustering small terminals, the overall cost of our solution to the initial instance is at most $(2+O(\eps))\cdot \opt$.


\emph{Case 2: When the optimum solution $\OPT$ has a bounded number of tours (Section \ref{sec:fewtours}).}
A $2+O(\eps)$ approximation is obtained by a simple adaptation of the techniques in \cite{arora1998polynomial,asano1997covering}. 

This completes the proof of \cref{thm:main}.

\paragraph{Open Questions.} To summarize, for the unsplittable Euclidean CVRP, in this paper we have designed algorithms that match the best known results for the special case in which all demands are equal. Can we go beyond that? 
Looking at \cref{tab:unsplittable}, we see that 
for the unsplittable Euclidean CVRP, \cref{thm:main} is not the end of the story: a gap remains in the approximation factor, between $2+\eps$ and $3/2$. A good starting point might be to focus on the special case when all demands are equal (first column of \cref{tab:unsplittable}). It is an interesting open question whether \cref{alg:main}, the algorithm in \cref{thm:main},  would yield a better-than-2 approximation for that case.

\section{Preliminaries}
\label{sec:notations}
Whenever needed, we will assume w.l.o.g.\ that $\eps>0$ is upper bounded by a sufficiently small constant. For a subset of terminals $U\subseteq V$, we let $\demand(U):=\sum_{v\in U}\demand(v)$.
For any point $p\in \mathbb{R}^2$, we let $\dist(p)$ denote the $p$-to-$O$ distance in the Euclidean plane.
\begin{definition}[bounded distance]
\label{def:bounded-dist}
Let $C:=(1/\eps)^{1/\eps}$.
We say that a set of terminals $V$ has \emph{bounded distance} if 
$\frac{D_{\max}}{D_{\min}}\leq C$, where $D_{\max}:=\max_{v\in V}\dist(v)$ and $D_{\min}:=\min_{v\in V}\dist(v)$.
\end{definition}

\begin{lemma}[adaptation from Theorem~3 of \cite{adamaszek2010ptas}]
\label{lem:bounded-distance}
There is a polynomial time algorithm to partition the set $V$ into disjoint subsets $V_1,\dots,V_m\subseteq V$, for some $m\in \mathbb{N}$, such that for each $i\in[1,m]$, $V_i$ has bounded distance, and for any $\rho\geq 1$, $\rho$-approximate solutions to the instances on $V_i$'s for all $i\in[1,m]$ yield a $(\rho+\eps)$-approximate solution to the instance on $V$. 
\end{lemma}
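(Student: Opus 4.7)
The plan is to partition $V$ into concentric annuli around the depot with geometrically spaced radii of ratio $C$, and run the given $\rho$-approximation independently in each annulus. Concretely, choose an offset $R_0>0$ and let $A_j := \{p \in \mathbb{R}^2 : R_0 C^{j-1} \leq \dist(p) < R_0 C^j\}$ for integer $j$; set $V_j := V \cap A_j$, keeping only the finitely many non-empty shells. By construction every non-empty $V_j$ has ratio $D_{\max}/D_{\min} \leq C$, and the number of shells is at most $O(\log_C(D_{\max}/D_{\min}))$, so the partition and the combined algorithm run in polynomial time.

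To establish the $(\rho+\eps)$ bound, the key intermediate claim is $\sum_j \opt(V_j) \leq (1+\eps/\rho)\,\opt(V)$, since then the union of $\rho$-approximate per-shell solutions costs at most $\rho \sum_j \opt(V_j) \leq (\rho+\eps)\,\opt(V)$. To prove the claim, I would take an optimal solution $\OPT$ for $V$ and, for each tour $t \in \OPT$ and each edge $uv$ of $t$ that crosses a shell boundary, perform a ``surgery'' replacing $uv$ by the two radial detours $u \to O$ and $O \to v$. After all such surgeries, $t$ becomes a closed walk that passes through $O$ several times; splitting it at each visit to $O$ yields sub-tours, each confined to a single shell and each serving a subset of $t$'s terminals (so the capacity bound is preserved). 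Grouping by shell then produces a feasible solution for every $V_j$.

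The technical heart, and the main obstacle, is to bound the total cost inflation from these surgeries. One surgery at an edge $uv$ with $|u| \leq |v|$ adds at most $|uO|+|Ov|-|uv| \leq 2|u|$ to the cost, and a deterministic per-tour analysis, tight on tours that span many shells, yields a multiplicative blow-up of order $C$, which is useless for the claim. Following Theorem~3 of \cite{adamaszek2010ptas}, the fix is to randomize the offset as $R_0 := D_{\min}\cdot C^{-\alpha}$ with $\alpha$ uniform in $[0,1]$ and bound expected costs. A standard shift calculation shows that, for each edge $uv$ of $\OPT$ of length $\delta$, the expected detour cost charged to it is $O(\delta/\log C)$: the probability that $uv$ crosses a ring scales like $\log(|v|/|u|)/\log C$, and the extra cost per crossing is at most $2\min(|u|,|v|)$. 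Summing over all edges of $\OPT$ gives expected total detour $\leq O(\opt/\log C)$, which is at most $(\eps/\rho)\,\opt$ using that $\log C = (1/\eps)\log(1/\eps)$ is much larger than $\rho/\eps$ for small $\eps$. A standard derandomization enumerating the polynomially many combinatorially distinct shifts then makes the bound deterministic, completing the proof.
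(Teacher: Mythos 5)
Your proof is correct and, as you yourself flag, reproduces the ring-decomposition-with-random-shift argument behind Theorem~3 of \cite{adamaszek2010ptas}, which is exactly what the paper invokes here without spelling out the details: exponentially spaced annuli of ratio $C$, surgery at boundary-crossing edges replacing $uv$ by $u\to O\to v$ at cost at most $2\min(\dist(u),\dist(v))$, a random radial offset giving expected detour $O(\opt/\log C)$, and derandomization by trying the $O(n)$ combinatorially distinct offsets. The only caveat worth noting is that your bound $\sum_j \opt(V_j)\le (1+O(1/\log C))\opt$ yields $(\rho+\eps)$ only for $\rho=O(\log(1/\eps))$ with $\eps$ small enough rather than literally for every $\rho\ge 1$, but this is a limitation inherent in the lemma's phrasing (a fixed partition cannot satisfy $\sum_j\opt(V_j)\le(1+\eps/\rho)\opt$ uniformly in $\rho$) and is harmless since the paper only applies the lemma with $\rho\le 2+O(\eps)$.
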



The following Assignment Lemma due to Becker and Paul~\cite{becker2019framework} will be used in \cref{lem:coarse} to combine subtours from different tours, while ensuring that the resulting tours violate the capacity only slightly.

\begin{lemma}[Assignment Lemma, Lemma~1 in \cite{becker2019framework}]
\label{lem:assignment}
Let $G=(A\uplus B,E)$ be a bipartite graph with $E\subseteq A\times B$, where for each $b\in B$, $N(b)\neq \emptyset$ denotes the set of neighbors of $b$. Each edge $(a,b)\in E$ has a weight $w(a,b)\geq 0$ and each vertex $b\in B$ has a weight $w(b)$ satisfying $0\leq w(b)\leq \sum_{a\in N(b)}w(a,b)$.
Then there exists a function $f: B\to A$ such that each vertex $b\in B$ is assigned by $f$ to a vertex $a\in N(b)$ and, for each vertex $a\in A$, we have
\[\sum_{b\in B\mid f(b)=a} w(b) - \sum_{b\in B\mid (a,b)\in E} w(a,b)\leq \max_{b\in B} \big\{w(b)\big\}.\]
\end{lemma}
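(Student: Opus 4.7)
The plan is to reduce the lemma to a classical bipartite-rounding argument. First, I would exhibit a fractional assignment whose load on each $a\in A$ already fits within $a$'s ``budget'' $\sum_{b:(a,b)\in E}w(a,b)$ by setting $x_{a,b}:=w(a,b)/\sum_{a'\in N(b)}w(a',b)$ for $a\in N(b)$ and $x_{a,b}:=0$ otherwise. Then $\sum_{a}x_{a,b}=1$ and, because $w(b)\le\sum_{a'\in N(b)}w(a',b)$,
\[
\sum_{b}x_{a,b}\,w(b)\;=\;\sum_{b}\frac{w(a,b)\,w(b)}{\sum_{a'\in N(b)}w(a',b)}\;\le\;\sum_{b:(a,b)\in E}w(a,b).
\]

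Let $M:=\max_{b\in B}w(b)$, $T_a:=\sum_{b:(a,b)\in E}w(a,b)+M$, and consider the polytope $P:=\{y\ge 0:\ \sum_{a}y_{a,b}=1\ \forall b;\ \sum_{b}w(b)\,y_{a,b}\le T_a\ \forall a;\ y_{a,b}=0\text{ if }(a,b)\notin E\}$. The fractional $x$ constructed above lies in $P$, so $P\ne\emptyset$, and every \emph{integer} point of $P$ is exactly an assignment $f$ with $f(b)\in N(b)$ satisfying the target inequality. It therefore suffices to produce an integer point of $P$.

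To do so I would pick any vertex $y^{*}$ of $P$. A standard rank count bounds the number of positive-support variables by $|A|+|B|$, so the bipartite support graph $H$ on $A\uplus B$ is a pseudoforest. While $H$ contains a (necessarily even) cycle, I would alternately shift $y^{*}$ by $\pm\delta$ along its edges with $\delta$ chosen maximally so that at least one variable becomes $0$ or $1$; the equalities $\sum_a y_{a,b}=1$ are preserved and one cycle is broken. Once $H$ is a forest, I would root each tree at an $a$-vertex and process leaves bottom-up: a $b$-leaf is already integer (its unique tree edge carries value $1$), while for an $a$-leaf I zero out its edge to the parent $b$ and round $b$ integrally onto its unique remaining tree-neighbor $a'$, which raises $a'$'s load by at most $w(b)\le M$.

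The main obstacle is controlling this extra rounded load: on any single $a\in A$ the excess must be bounded by one $w(b)\le M$, not by several accumulating contributions. The tree structure combined with the bottom-up ordering is exactly what enforces this, since each internal $a$-vertex absorbs the mass of at most one fractional ``child'' $b$ during the sweep; this is the content of the Shmoys--Tardos bipartite rounding scheme. The resulting integer assignment satisfies $\sum_{b:f(b)=a}w(b)\le T_a$, i.e.,
\[
\sum_{b:f(b)=a}w(b)\;-\;\sum_{b:(a,b)\in E}w(a,b)\;\le\;M\;=\;\max_{b\in B}w(b),
\]
as claimed.
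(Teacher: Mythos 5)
The paper does not prove this lemma; it is imported verbatim from Becker and Paul~\cite{becker2019framework} and used as a black box, so there is no internal proof to compare against. Your proposed route --- build the canonical fractional assignment $x_{a,b}=w(a,b)/\sum_{a'\in N(b)}w(a',b)$, check that it respects each $a$'s budget $\sum_{b:(a,b)\in E}w(a,b)$, and round it via the Shmoys--Tardos bipartite rounding scheme --- is a legitimate way to prove the statement, and the additive $\max_b w(b)$ slack is exactly what that theorem delivers. However, as written the argument has a genuine gap.

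You fold the $+M$ slack \emph{into} the polytope $P$ (setting $T_a=\sum_{b:(a,b)\in E}w(a,b)+M$), then take a vertex $y^*$ of $P$ and round it. But a vertex of $P$ may already have $\sum_b w(b)\,y^*_{a,b}=T_a$, and your rounding can add a further $w(b)\le M$; the integer point you produce is therefore only guaranteed to satisfy $\sum_{f(b)=a}w(b)\le T_a+M=\sum_{b:(a,b)\in E}w(a,b)+2M$, which is off by a factor of two from the claimed bound (and, in particular, need not lie in $P$, so ``it suffices to produce an integer point of $P$'' is not discharged). The fix is standard: define the polytope with the \emph{tight} budgets $L_a:=\sum_b w(b)\,x_{a,b}\le\sum_{b:(a,b)\in E}w(a,b)$ and no slack, note that $x$ itself witnesses feasibility, take a vertex there, and then the $+M$ incurred by rounding lands you exactly at the desired $\sum_{b:(a,b)\in E}w(a,b)+M$. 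A second, smaller flaw is in the tree-rounding sweep: after you ``zero out the edge from the $a$-leaf to its parent $b$,'' $b$ need not have a \emph{unique} remaining neighbor, and the bottom-up peeling you describe does not by itself prevent several fractional $b$'s from piling onto the same $a$. The Shmoys--Tardos argument avoids this by noting that in the strictly-fractional support forest every $B$-node has degree at least two, so one can match each such $b$ to one of its \emph{children} $A$-nodes; this yields a matching in which each $a$ absorbs at most one fractional $b$, and it is that matching, not leaf-peeling, that enforces the single-$w(b)$ excess per $a$.
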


The following lemma, which will be used to prove \cref{lem:coarse}, was initially due to Altinkemer and  Gavish~\cite{altinkemer1987heuristics} and rephrased by  Blauth, Traub, and Vygen~\cite{blauth2021improving}.

\begin{lemma}[\cite{altinkemer1987heuristics,blauth2021improving}]
\label{lem:rad}
Given an instance of the unsplittable capacitated vehicle routing problem with terminals $V$, depot $O$, and a traveling salesman tour $t_{\TSP}$ on $V\cup \{O\}$, there exists a feasible solution of cost at most $\cost(t_{\TSP})+\sum_{v\in V} 4\cdot\dist(v)\cdot \demand(v)$.
\end{lemma}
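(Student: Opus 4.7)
The plan is to construct the required feasible solution via the Iterated Tour Partitioning (ITP) algorithm of Altinkemer and Gavish: greedily chop the TSP tour into segments of total demand at most $1$, and shortcut each segment through the depot $O$.

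First, I would fix an orientation of $t_{\TSP}$, remove the depot from the cycle, and label the terminals in tour order as $u_1, u_2, \ldots, u_n$. For each starting index $s \in \{1, \ldots, n\}$, walk through the cyclic sequence $u_s, u_{s+1}, \ldots, u_{s+n-1}$ (indices modulo $n$) and greedily pack consecutive terminals into bins $B_1^{(s)}, B_2^{(s)}, \ldots$ of total demand at most $1$: whenever adding the next terminal would exceed the capacity, close the current bin and open a new one beginning with that terminal. Each bin $B$ then yields a feasible tour by prepending the depot edge to its first vertex, following the TSP edges between consecutive terminals of $B$, and appending the depot edge from its last vertex back to $O$.

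Second, I would bound the cost of the ITP solution for a fixed $s$. Each interior bin boundary replaces one TSP edge by two depot edges, so by the triangle inequality the total cost is at most
\[
\cost(t_{\TSP}) + \sum_{k} \bigl[\dist(\mathrm{last}(B_k^{(s)})) + \dist(\mathrm{first}(B_{k+1}^{(s)}))\bigr],
\]
summed over the interior boundaries. The greedy rule forces $\demand(B_k^{(s)}) + \demand(\mathrm{first}(B_{k+1}^{(s)})) > 1$ at every boundary, which I will use as a ``bin fullness'' condition to charge the detour distances to the terminal demands.

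Third, I would argue that by averaging the detour sum over the $n$ possible starting indices (or, more smoothly, over a continuous shift along the cumulative-demand axis), some starting index $s^*$ achieves detour at most $\sum_v 4\cdot \dist(v)\cdot \demand(v)$. Viewing each terminal $v$ as occupying an interval of length $\demand(v)$ on the demand axis of total length $D = \sum_v \demand(v)$, a uniformly random shift makes $v$ the ``first of a bin'' with density at most $\demand(v)$, and a symmetric statement holds for ``last of a bin''. In the splittable setting these two endpoints would coincide and give coefficient $2$; in the unsplittable setting the fullness condition at each boundary effectively doubles this per-endpoint weight (to absorb the ``overshoot'' of the greedy rounding), and combining the two endpoints per bin delivers the coefficient~$4$.

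The main obstacle will be step three: executing the averaging so that each terminal's contribution to the boundary distance sum is at most $4\dist(v)\demand(v)$, with no residual $\cost(t_{\TSP})$ term. The splittable analogue integrates trivially to coefficient $2$, but in the unsplittable case the first and last endpoints of adjacent bins do not coincide, so a careful pairing argument---using the fullness condition $\demand(B) + \demand(\mathrm{first}(B')) > 1$ together with the triangle inequality to relate $\dist(\mathrm{last}(B))$ to $\dist(\mathrm{first}(B'))$ plus the intervening TSP edge---is needed to absorb the asymmetry and achieve the tight factor~$4$.
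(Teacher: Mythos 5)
The paper does not prove \cref{lem:rad}; it is cited from Altinkemer--Gavish and Blauth--Traub--Vygen, so there is no in-paper proof to compare against. Your sketch is in the right family (iterated tour partitioning on $t_{\TSP}$), but it does have a genuine gap, which you yourself flag: step three is asserted rather than argued, and I do not think the particular construction you chose cleanly delivers the coefficient $4$.

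Two specific issues. First, averaging over the $n$ discrete starting \emph{indices} $s$ is the wrong measure: the bound $\sum_v 4\,\dist(v)\,\demand(v)$ weights terminals by demand, so the shift must be chosen on the cumulative-demand axis, uniformly over a continuous parameter $\theta\in[0,1)$ (or demand-weighted over terminals). Second, and more substantively, \emph{greedy} packing does not yield a clean shift-averaging analysis, because a single change in $\theta$ can move every bin boundary non-locally, and the event ``$v$ is the first terminal of a bin'' is not easy to decouple across $v$. The standard proof avoids greedy entirely: place marks on the demand circle at positions $\theta, \theta+1, \theta+2,\dots$; each mark lands inside the arc of some terminal $L_i$, and the probability that a given $v$ is hit is exactly $\demand(v)$ (since $\demand(v)\le 1$). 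Terminals whose arcs lie strictly between two consecutive marks form a bin of demand at most $1$; each \emph{straddling} terminal $L_i$ is served by its own depot round trip of cost $2\dist(L_i)$. The detour cost of bin $i$ is $\dist(\mathrm{first}(B_i))+\dist(\mathrm{last}(B_i))$, and by the triangle inequality applied to the two TSP edges that were dropped at the boundaries (the edges $(L_i,\mathrm{first}(B_i))$ and $(\mathrm{last}(B_i),L_{i+1})$), this is at most $\dist(L_i)+\dist(L_{i+1})$ plus the dropped edge lengths, which cancel against $\cost(t_{\TSP})$. Summing, the total detour is at most $4\sum_i \dist(L_i)$, whose expectation over $\theta$ is $4\sum_v \dist(v)\,\demand(v)$, and some fixed $\theta$ achieves at most the expectation. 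Your ``fullness condition doubles the per-endpoint weight'' intuition is pointing at the right factor, but without this explicit construction (separate tours for straddlers, not greedy inclusion) and the triangle-inequality charging that cancels the dropped TSP edges, the argument does not close.
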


An elementary but critical step in our construction is to partition the plane into a collection of Voronoi cells as follows.
\begin{definition}
\label{def:center}
Let $Z$ denote the set of points in $\mathbb{R}^2$ whose polar coordinates $(r,\Theta)$ satisfy \[r=\left(1+\frac{\eps^2}{4}\cdot i\right)\cdot D_{\min}, \text{ where $i$ is an integer such that }0\leq i< K_1:=\frac{4C}{\eps^2}\] and 
\[\Theta= \frac{2\pi}{K_2}\cdot j, \text{ where $j$ is an integer such that }0\leq j< K_2:=\frac{8\pi C}{\eps^2}.\]
The points in $Z$ are called \emph{centers}. For each $z\in Z$, the \emph{cell} $c(z)$ of $z$ is the set of points $p\in \mathbb{R}^2$ such that $\dist(p)\in[D_{\min},D_{\max}]$ and $p$ is closer to $z$ than to any other center (breaking ties arbitrarily). 
The \emph{boundary} of $c(z)$ is denoted by $\bd(z)$. 
\end{definition}
\begin{fact}\label{fact:centers}
The number of centers in $Z$ is at most $K:=\frac{32\pi\cdot C^2}{\eps^4}$.
For any center $z\in Z$ and any point $v\in c(z)$, the $z$-to-$v$ distance is at most $\frac{\eps^2\cdot D_{\min}}{2}$. Furthermore, $\bd(z)\leq \eps^2 \cdot D_{\min}$.
\end{fact}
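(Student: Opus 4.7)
The plan is to verify each of the three assertions of Fact~\ref{fact:centers} by elementary polar-coordinate geometry tied to the specific constants $K_1$ and $K_2$ from Definition~\ref{def:center}. The first assertion, $|Z|\leq K$, follows immediately from a product count: at most $K_1$ admissible radial indices times at most $K_2$ admissible angular indices, i.e.\ $|Z|\leq K_1 K_2 = K$.

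For the distance bound, I would fix $v\in c(z)$ with polar coordinates $(r_v,\Theta_v)$ and let $z^{\star}\in Z$ be the center obtained by rounding $r_v$ to the nearest radial grid value and $\Theta_v$ to the nearest angular grid value. Then the radial offset is at most $\frac{\eps^2}{8}D_{\min}$ and the angular offset is at most $\pi/K_2$. Converting the angular offset to Euclidean length via $\|(r,\Theta)-(r,\Theta')\|\leq r\cdot|\Theta-\Theta'|$ with $r\leq D_{\max}\leq C\cdot D_{\min}$, and adding the radial contribution by the triangle inequality, gives
\[
\|v-z^{\star}\|\;\leq\;\tfrac{\eps^2}{8}D_{\min}+D_{\max}\cdot\tfrac{\pi}{K_2}\;=\;\tfrac{\eps^2}{8}D_{\min}+\tfrac{\eps^2}{8}D_{\min}\;=\;\tfrac{\eps^2}{4}D_{\min}.
\]
Since $v\in c(z)$, we then have $\|v-z\|\leq \|v-z^{\star}\|\leq \tfrac{\eps^2}{2}D_{\min}$, as claimed. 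A small sanity check on $K_1$ ensures that the grid covers the entire annulus $[D_{\min},D_{\max}]$: the largest radial grid value is $(1+C-\eps^2/4)D_{\min}\geq D_{\max}$.

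For the boundary-length bound, I would observe that $c(z)$ is contained in the intersection of the annulus $\{D_{\min}\leq r\leq D_{\max}\}$ with the four half-planes defined by the perpendicular bisectors between $z$ and its two radial and two angular grid neighbors. Each such bisector lies at Euclidean distance at most $\frac{\eps^2}{8}D_{\min}$ from $z$: immediately for the radial direction (half the radial spacing $\frac{\eps^2}{4}D_{\min}$), and for the angular direction via $r_z\sin(\pi/K_2)\leq D_{\max}\cdot \pi/K_2=\frac{\eps^2}{8}D_{\min}$. Thus $c(z)$ fits inside a rectangle of dimensions $\frac{\eps^2}{4}D_{\min}\times \frac{\eps^2}{4}D_{\min}$ and hence of perimeter $\eps^2\cdot D_{\min}$. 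If $c(z)$ touches the inner or outer circle of the annulus, one side of this bounding region is replaced by a circular arc of angular width at most $2\pi/K_2$ and radius at most $D_{\max}$, whose length is at most $\frac{\eps^2}{4}D_{\min}$, so the perimeter bound is preserved.

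The main point of care throughout is to ensure that the Euclidean length corresponding to the angular spacing never exceeds the radial spacing, which is precisely what the factor of $C$ in the denominator of $K_2$ is designed to enforce; no step is conceptually difficult, but the polar-to-Euclidean conversion and the behaviour of cells adjacent to the inner or outer boundary of the annulus both require careful bookkeeping.
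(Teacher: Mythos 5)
The paper states this Fact without proof, so there is no in-text argument to compare against; your proposal supplies the expected elementary polar-grid argument, and its three parts are essentially correct. Two small points are worth tightening, though.

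First, a cosmetic issue in the distance bound: your displayed computation establishes $\|v-z^{\star}\|\leq \tfrac{\eps^2}{4}D_{\min}$, which is \emph{stronger} than the claimed $\tfrac{\eps^2}{2}D_{\min}$; the concluding chain should therefore read $\|v-z\|\leq\|v-z^{\star}\|\leq\tfrac{\eps^2}{4}D_{\min}\leq\tfrac{\eps^2}{2}D_{\min}$. (You also correctly flag that one must check the radial grid reaches $D_{\max}$; the largest grid radius is $(1+C-\eps^2/4)D_{\min}\geq D_{\max}$, so rounding to nearest is always within $\tfrac{\eps^2}{8}D_{\min}$.)

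Second, and more substantively, your perimeter bound implicitly uses the principle ``a region contained in a bounding region has no larger perimeter,'' which requires convexity of the \emph{contained} region, not just of the bounding one. When $c(z)$ abuts the inner circle $r=D_{\min}$, the inherited circular arc is concave from the interior of $c(z)$, so $c(z)$ is not convex and the containment argument as stated does not directly bound $\bd(z)$. The excess is harmless in magnitude---on a cell of diameter $O(\eps^2 D_{\min})$ sitting on a circle of radius $D_{\min}$, the arc exceeds its chord by only $O(\eps^6 D_{\min})$---but a cleaner way to close the gap is to split $\partial c(z)$ into (i) its straight bisector pieces, which lie on the boundary of the convex set $\mathrm{Vor}(z)\cap\{\text{radial projection onto the ray at }\Theta_z \text{ in } [r_z-\tfrac{\eps^2}{8}D_{\min},\,r_z+\tfrac{\eps^2}{8}D_{\min}]\}\subseteq\mathcal R$ and hence have total length at most the perimeter of $\mathcal R$, and (ii) its at most two circular arcs, each of angular width at most $2\pi/K_2$ and radius at most $D_{\max}$, hence each of length at most $\tfrac{\eps^2}{4}D_{\min}$, and then add the two contributions. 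This may yield a small constant multiple of $\eps^2 D_{\min}$ rather than $\eps^2 D_{\min}$ on the nose, but that is immaterial to the only place the Fact is invoked (the bound $3K\eps^2 D_{\min}\leq \tfrac{3}{2}\eps\cdot\opt$ in Lemma~\ref{lem:W} absorbs any constant factor).
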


\section{Algorithm for Big Terminals Only (Proof of \cref{thm:big-demands})}

\label{sec:big}
\cref{thm:big-demands} follows directly from  \cref{lem:bounded-distance} and the following \cref{lem:big-demands}. 

\begin{lemma}
\label{lem:big-demands}
Let $\eps>0$. \cref{alg:big-terminals} is a polynomial time $(1+\eps)$-approximation algorithm for the unsplittable Euclidean capacitated vehicle routing problem under the assumption that all terminals are big and the set of terminals has bounded distance.
\end{lemma}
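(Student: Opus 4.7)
The plan is to reduce the given instance to a canonical one that admits an exact polynomial-time algorithm, paying a $(1+O(\eps))$ factor at each step, and then rescale $\eps$. The key lower bound driving the whole analysis is $\opt \geq 2n\eps\,D_{\min}$: every tour serves at most $1/\eps$ terminals since each demand is at least $\eps$, so the number of tours is at least $n\eps$, and each tour has length at least $2D_{\min}$ because it must reach some terminal and come back.

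I would first discretize locations. Using the Voronoi decomposition of \cref{def:center}, move each terminal $v$ to the center $z(v)$ of its cell to obtain an instance $I'$ with terminals at only $K=O(1)$ distinct points. By \cref{fact:centers} the displacement is at most $\eps^2 D_{\min}/2$, so solutions transport between the two instances via round-trip detours of total cost at most $n\cdot \eps^2 D_{\min} = O(\eps)\cdot\opt$, giving $\opt(I')\leq (1+O(\eps))\opt$ and the converse lifting.

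I would then compress demands by Karmarkar--Karp linear grouping carried out separately at each cell center: for each $z$, sort the terminals of $V_z$ in decreasing demand order and split them into $k=\lceil C/\eps^2\rceil = O(1)$ equal-sized groups $G_{z,1},\dots,G_{z,k}$, then round every demand up to the maximum within its group. The resulting instance $I''$ has at most $Kk=O(1)$ distinct (center, demand) pairs, henceforth called \emph{types}; since rounded demands dominate originals, any feasible solution of $I''$ is feasible for $I'$. For the reverse estimate $\opt(I'')\leq (1+O(\eps))\opt(I')$ I would mimic the Karmarkar--Karp argument: starting from $\opt(I')$, keep the same geometric tours but at each visit to $z$ replace the terminal $v$ picked up (if $v\in G_{z,i}$ with $i\geq 2$) by its shifted counterpart $\sigma_z(v)\in G_{z,i-1}$, while serving each $v\in G_{z,1}$ via a dedicated singleton tour of cost at most $2D_{\max}$. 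The shift lives within a single cell so the geometry is unaffected; the per-cell domination $d(\sigma_z(v))\geq d''(v)$ implies global capacity feasibility in $I''$; and the total singleton overhead is $\sum_z |G_{z,1}|\cdot 2D_{\max}\leq (n/k)\cdot 2C\,D_{\min}+O(K\, D_{\min}) = O(\eps)\opt$ by the choice of $k$, with the degenerate case $n=O(K)$ handled by brute force.

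It then suffices to solve $I''$ exactly. Because each tour serves at most $1/\eps$ terminals, the set $\mathcal{C}$ of \emph{tour configurations}---multisets of types with total demand at most one---has constant size, and for each $c\in\mathcal{C}$ one precomputes the optimal TSP cost $\cost(c)$ through the corresponding centers and the depot in $O(1)$ time by brute-force enumeration. A feasible solution of $I''$ is a multiset of configurations, so its exact optimum equals the value of the integer program
\[
\min_{x\in\mathbb{Z}_{\geq 0}^{\mathcal{C}}}\ \sum_{c\in\mathcal{C}} x_c\,\cost(c) \quad\text{s.t.}\quad \sum_{c\in\mathcal{C}} x_c\, m_c(\tau) = N_\tau \text{ for every type }\tau,
\]
where $m_c(\tau)$ is the multiplicity of $\tau$ in $c$ and $N_\tau$ counts type-$\tau$ terminals. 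This IP has constant dimension, so Lenstra's algorithm solves it in time polynomial in the encoding, hence in $n$. Composing the three reductions and rescaling $\eps$ yields the $(1+\eps)$-approximation. The main obstacle I anticipate is making the per-cell shift globally consistent---each non-$G_{z,1}$ terminal covered exactly once and each $G_{z,1}$ terminal handled by exactly one singleton---while controlling the singleton overhead; this is precisely why the grouping must be performed per cell (so the shifts do not disturb the tour geometry) and why the bounded-distance hypothesis is indispensable (to convert $2D_{\max}$ into a constant multiple of $D_{\min}$ and hence a small fraction of $\opt/(n\eps)$).
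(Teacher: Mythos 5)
Your proposal is correct and follows essentially the same route as the paper's proof: discretize terminal locations to the $O_\eps(1)$ Voronoi centers (paying $O(\eps)\opt$ via the displacement bound from \cref{fact:centers} and the lower bound $\opt \geq 2n\eps D_{\min}$), apply adaptive rounding (Karmarkar--Karp linear grouping) per cell to get $O_\eps(1)$ demand classes while absorbing the top group into singleton tours, and then solve the resulting constant-description instance exactly in polynomial time. Two cosmetic differences: the paper invokes ``exhaustive search'' over the $n^{O_\eps(1)}$ multiplicity vectors of tour types where you invoke Lenstra's integer programming in fixed dimension---both are valid and you are if anything more precise; and the paper only applies the grouping at cells with at least $1/\beta$ terminals, which lets it write the singleton overhead as $2\beta|Q(z)|$ without an additive $+1$ per cell and avoids your extra $O(KCD_{\min})$ term and its ``brute force for small $n$'' workaround.
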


\begin{algorithm}[h]
\caption{Algorithm for big terminals only. Let $I$ be the input instance.
Let $\beta=\eps^2/(4C)$.}
\label{alg:big-terminals}
\begin{algorithmic}[1]
\State
Move each big terminal in $I$ to the closest center
\For{each center $z\in Z$ that contains at least $1/\beta$ terminals}
    \State Let $Q(z)$ denote the set of terminals at $z$
    \State Sort the terminals in $Q(z)$ in non-decreasing order of their demands
    \State Partition $Q(z)$ into $1/\beta$ groups of equal cardinality\footnotemark
    \State Round the demand of each terminal $v\in Q(z)$ to the maximum demand in the group of $v$
\EndFor
\State Let $I'$ be the resulting instance of the unsplittable Euclidean CVRP
\State Compute an optimal solution $S'$ to $I'$ \Comment{\cref{lem:big-demands-2}}
\For{each tour $t'\in S'$ and for each  terminal $v$ covered by $t'$}
\State Let $z\in Z$ denote the center where $v$ is located
\State Add to $t'$ two copies of the edge  between $z$ and the corresponding terminal of $v$ in $I$   
\EndFor
\State \Return the resulting solution $S$
\end{algorithmic}
\end{algorithm} 
\footnotetext{When $|Q(z)|$ is not an integer multiple of $1/\beta$, each of the first groups has one less element than each of the remaining groups.}

In the rest of the section, we prove \cref{lem:big-demands}.

\begin{lemma}
\label{lem:big-demands-1}
Let $\opt(I')$ denote the cost of an optimal solution to the instance $I'$.
We have $\opt(I')\leq (1+\eps)\cdot \opt$. 
\end{lemma}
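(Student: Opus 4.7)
The plan is to build a feasible solution $S''$ for $I'$ from an optimal solution $S$ of $I$ by two successive modifications that mirror the two operations defining $I'$: first the move-to-center step of \cref{alg:big-terminals}, then the linear-grouping rounding. This will give $\cost(S'')\leq (1+\eps)\cdot\opt$ and hence the lemma.

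In the first phase, I reroute each tour of $S$ so that every visit to a terminal $v$ is replaced by a visit to the center $z(v)\in Z$ closest to $v$, where $v$ is now picked up. By \cref{fact:centers}, $v$ is within distance $\eps^2 D_{\min}/2$ of $z(v)$, so the triangle inequality charges at most $\eps^2 D_{\min}$ of extra length per terminal and at most $\eps^2 D_{\min}\cdot|V|$ overall. Call the resulting solution $S'$; it is feasible for the ``centered'' instance (same demands as $I$), since capacities are untouched.

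In the second phase, I adjust the tours of $S'$ at each \emph{populated} center $z$ (i.e., $|Q(z)|\geq 1/\beta$) so that the rounded demands $\widetilde{d}$ of $I'$ are respected. The idea is Karmarkar--Karp-style linear grouping: write $Q(z)=G_1\cup\cdots\cup G_{1/\beta}$ sorted in non-decreasing demand order (so $|G_i|\leq|G_{i+1}|$), and fix an injection $\sigma_z:Q(z)\setminus G_{1/\beta}\to Q(z)\setminus G_1$ with $\sigma_z(G_i)\subseteq G_{i+1}$. Then $\widetilde{d}(v)=\max_{u\in G_i}\demand(u)\leq \min_{u\in G_{i+1}}\demand(u)\leq \demand(\sigma_z(v))$ for every $v$ in the domain. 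For each tour $t$ of $S'$, if $t$ previously picked up $A_t(z)\subseteq Q(z)$, I now have $t$ pick up $B_t(z):=\sigma_z^{-1}(A_t(z))\subseteq Q(z)\setminus G_{1/\beta}$ instead; because all terminals of $Q(z)$ sit at the common point $z$, the path of $t$ is unchanged, and the key inequality gives $\widetilde{d}(B_t(z))\leq \demand(A_t(z))$, so tour capacities remain valid. A direct verification (using injectivity of $\sigma_z$ and that $S$ covers each element of $Q(z)$ exactly once) shows $\bigcup_t B_t(z)=Q(z)\setminus G_{1/\beta}$, so the only uncovered terminals are those in $G_{1/\beta}$; I cover each by a dedicated extra tour $O\to z\to O$. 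Summing $|G_{1/\beta}(z)|\leq \beta|Q(z)|+1$ over the at most $\beta|V|$ populated centers bounds the added cost by $4\beta D_{\max}|V|\leq \eps^2 D_{\min}|V|$, using $D_{\max}\leq C\cdot D_{\min}$ and $\beta=\eps^2/(4C)$.

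Finally, I combine the two extra-cost bounds against $\opt$. Every tour of $\OPT$ has length at least $2D_{\min}$, and $\OPT$ uses at least $\demand(V)\geq \eps|V|$ tours, so $\opt\geq 2D_{\min}\cdot\max(1,\eps|V|)$. A short case split on whether $|V|\geq 1/\eps$ then gives $\eps^2 D_{\min}|V|\leq (\eps/2)\cdot\opt$, so each phase contributes at most $(\eps/2)\cdot\opt$, yielding $\cost(S'')\leq (1+\eps)\cdot\opt$. The main obstacle is phase two: it is the capacity/coverage bookkeeping under the injection $\sigma_z$ that makes the shift work cleanly, and the choice $\beta=\eps^2/(4C)$ is precisely calibrated so that the cost of servicing the leftover terminals of $G_{1/\beta}$ with dedicated tours remains absorbed in the $\eps$-slack.
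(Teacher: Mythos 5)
Your proof is correct and mirrors the paper's own argument: move each big terminal to its nearest center (charged against the at most $1/\eps$ terminals per tour, or equivalently against $\opt\geq \eps|V|\cdot 2D_{\min}$), then do adaptive rounding with the largest demand group in each populated center served by dedicated depot tours. The paper invokes the bin-packing linear-grouping analysis implicitly with the remark ``the analysis is similar to the one for adaptive rounding for bin packing,'' whereas you explicitly construct the injection $\sigma_z$ shifting group $G_i$ into $G_{i+1}$; this is exactly the standard Karmarkar--Karp shift, so the content is the same, just spelled out. Your bound of $4\beta D_{\max}|V|$ via the observation that at most $\beta|V|$ centers are populated matches the paper's $2\beta n\cdot 2D_{\max}$ bound, and both reduce to $(\eps/2)\cdot\opt$ via $\opt\geq\eps|V|\cdot 2D_{\min}$.
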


\begin{proof}
First, we analyze the cost increase due to moving terminals to the closest centers. 
Consider any tour $t$ in $\OPT$.
Since all terminals are big, $t$ contains at most $1/\eps$ terminals. 
By \cref{fact:centers}, the cost increase of moving each terminal to the closest center is at most $\eps^2\cdot D_{\min}$.
Thus the overall cost of moving the terminals on $t$ to the closest centers is at most $\eps\cdot D_{\min}\leq \eps\cdot \cost(t)/2$.
Summing over all tours in $\OPT$, the overall cost increase of moving terminals to the closest centers is at most $(\eps/2)\cdot \opt$.

Next, we analyze the extra cost due to the adaptive rounding. 
The analysis is similar to the one for adaptive rounding for bin packing, except that, for each center $z$, each terminal in the group with the largest demands at $z$ is connected to the depot by adding a separate tour.
Consider any center $z$.
If $|Q(z)|<1/\beta$, then there is no extra cost due to the adaptive rounding;
If $|Q(z)|\geq 1/\beta$, the number of terminals at $z$ in the group with the largest demands is at most 
$\lceil \beta\cdot |Q(z)|\rceil\leq \beta\cdot|Q(z)|+1\leq 2\beta\cdot|Q(z)|$.
So the extra cost due to the adaptive rounding is at most $2\beta\cdot|Q(z)|\cdot 2 D_{\max}$.
Summing over all centers $z$, we have the overall extra cost is at most \[2\beta\cdot n \cdot 2 D_{\max}=
 (\eps^2/(2C)) \cdot n\cdot 2 D_{\max}\leq 
(\eps^2/2) \cdot n\cdot 2 D_{\min}\leq (\eps/2)\cdot \opt,\] using the fact that $\opt\geq \eps \cdot n\cdot 2 D_{\min}$ (since $\OPT$ has at least $\eps \cdot n$ tours of length at least $2D_{\min}$ each). 

The claim follows by summing the extra costs in both phases.
\end{proof}

\begin{lemma}
\label{lem:big-demands-2}
The optimal solution $S'$ for $I'$ can be computed in polynomial time.
\end{lemma}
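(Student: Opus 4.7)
The plan is to exploit the highly restricted structure of $I'$: all terminals lie at one of the $|Z| \leq K$ center locations, and at each center the adaptive rounding leaves at most $1/\beta$ distinct demand values. Hence the total number of (location, demand) \emph{types} is bounded by a constant $T \leq K/\beta$, and it suffices to remember, for each type $\tau$, the number $n_\tau$ of terminals of that type in $I'$.

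First, I would enumerate \emph{tour configurations}. Because every rounded demand is at least $\eps$, any feasible tour covers at most $1/\eps$ terminals, so it is determined (as a multiset of terminals) by a multiset of at most $1/\eps$ types whose rounded demands sum to at most $1$. The number $M$ of such multisets is at most $\binom{T + 1/\eps}{1/\eps}$, a constant depending only on $\eps$. For each configuration I would compute the cheapest tour realizing it: since the set of distinct points to be visited (depot plus at most $1/\eps$ centers) has constant size, the corresponding TSP is solved exactly in $O(1)$ time by brute force over permutations. Write $\gamma_i$ for the cost of the optimal tour realizing configuration $i$, and $a_{i,\tau}$ for the multiplicity of type $\tau$ in that configuration.

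Next, I would reduce the problem of assembling configurations into a global solution of $I'$ to the integer linear program
\[
\min \sum_{i=1}^{M} \gamma_i\, x_i \quad\text{subject to}\quad \sum_{i=1}^{M} a_{i,\tau}\, x_i \;=\; n_\tau \text{ for every type } \tau, \qquad x_i \in \mathbb{Z}_{\geq 0}.
\]
This ILP has a constant number $M$ of variables and a constant number $T$ of constraints, so by Lenstra's algorithm for integer programming in fixed dimension it is solved in time polynomial in the encoding of the $n_\tau$'s, hence polynomial in $n$. Given an optimal $(x_i^\ast)$, the algorithm outputs $x_i^\ast$ copies of the stored optimal tour for configuration $i$.

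The main thing to check is that this reduction is tight. Any feasible solution $S'$ to $I'$ decomposes into tours, each of which matches exactly one of the enumerated configurations, and replacing each tour by the stored optimal TSP tour for its configuration can only decrease the cost; this shows that the ILP optimum is at most $\opt(I')$. Conversely, any feasible vector $(x_i)$ for the ILP directly yields a feasible solution to $I'$ of the same cost, since the constraints guarantee that each terminal type is covered exactly $n_\tau$ times and each configuration respects the capacity by construction. The only mild subtlety is handling the case where $|Q(z)|$ is not a multiple of $1/\beta$ so that groups have slightly different sizes, but this does not change the counting of types or the validity of the enumeration. Hence $S'$ is an optimal solution to $I'$, computed in polynomial time.
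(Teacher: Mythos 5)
Your proof is correct and rests on the same core observation as the paper's: after moving to centers and adaptive rounding, $I'$ has $O_\eps(1)$ distinct (location, demand) types, each tour covers at most $1/\eps$ terminals, so there are only $O_\eps(1)$ possible tour configurations and the instance is determined by the counts $n_\tau$. The paper finishes by ``exhaustive search'' (implicitly: try all multiplicity vectors over the constantly many tour types, an $n^{O_\eps(1)}$ brute force), whereas you finish by writing the same counting problem as an ILP in $M=O_\eps(1)$ variables and $T=O_\eps(1)$ constraints and invoking Lenstra's fixed-dimension IP algorithm, which runs in time polynomial in $\log n$. Both are valid; your route is cleaner and gives a better (polylogarithmic rather than $n^{O_\eps(1)}$) dependence on $n$ in the solving step, at the cost of appealing to a heavier black box, while the paper's route is more elementary but leaves the search procedure unspecified.
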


\begin{proof}
There are only $O_\eps(1)$ centers, so there are only $O_\eps(1)$ positions for the terminals in $I'$.
Thanks to the adaptive rounding of the demands, each center has $O_\eps(1)$ distinct demands in $I'$.  Hence $I'$ can be described concisely by $O_\eps(1)$ pairs of ``position'' and ``demand'' with an integer for each pair giving the number of terminals located at that position and with that demand.
Moreover, since the demands are big, at most $1/\eps$ terminals are covered by each tour, hence there are only $O_\eps(1)$ possible distinct types of tours. 
Thus the instance $I'$ can be solved optimally in polynomial time by exhaustive search.  
\end{proof}

\begin{lemma}
\label{lem:big-demands-conversion}
$\cost(S)\leq (1+\eps/2)\cdot \cost(S')$. 
\end{lemma}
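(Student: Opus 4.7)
The plan is to analyze $S$ as the solution obtained from $S'$ by attaching, for each covered terminal $v$, a back-and-forth detour of length $2\cdot \|z-v\|$ between the center $z\in Z$ at which $v$ was placed in $I'$ and the original location of $v$ in $I$. Thus
\[\cost(S)-\cost(S')\;\leq\;\sum_{v\in V}2\cdot \|z_v-v\|,\]
where $z_v$ denotes the center to which $v$ was moved in line~1 of \cref{alg:big-terminals}. By \cref{fact:centers}, $\|z_v-v\|\leq \eps^2\cdot D_{\min}/2$ for every $v$, so the overall extra cost is at most $\eps^2\cdot n\cdot D_{\min}$.

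To turn this additive bound into a $(\eps/2)$-fraction of $\cost(S')$, I would use the bigness assumption. The adaptive rounding only increases demands, so in $I'$ every terminal still has demand at least $\eps$, and each tour of $S'$ covers at most $1/\eps$ terminals. Consequently $S'$ contains at least $\eps\cdot n$ tours. Each such tour starts and ends at the depot and visits at least one terminal, which (by the bounded-distance assumption together with the fact that centers lie at distance at least $D_{\min}$ from $O$) sits at Euclidean distance at least $D_{\min}$ from $O$; hence each tour has length at least $2D_{\min}$. Summing gives $\cost(S')\geq 2\eps\cdot n\cdot D_{\min}$, and therefore
\[\cost(S)-\cost(S')\;\leq\; \eps^2\cdot n\cdot D_{\min}\;=\;\frac{\eps}{2}\cdot(2\eps\cdot n\cdot D_{\min})\;\leq\;\frac{\eps}{2}\cdot \cost(S'),\]
which yields the claim.

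Finally, I would briefly check feasibility of $S$ for the original instance $I$: the rounded demands in $I'$ dominate the original demands of $I$, so the capacity constraint of each tour of $S'$ still holds when the true (smaller) demands are restored; reattaching each terminal to its original position via the added back-and-forth edges changes only the geometry, not the set of terminals covered by each tour. The argument is essentially a direct calculation, so there is no real obstacle; the only point that requires care is the lower bound $\cost(S')\geq 2\eps\cdot n\cdot D_{\min}$, which hinges on using both the bigness of demands (to bound the number of tours from below) and the bounded-distance assumption (to bound the length of each tour from below).
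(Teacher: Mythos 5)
Your proof is correct and rests on exactly the same facts as the paper's: each tour of $S'$ covers at most $1/\eps$ terminals (since demands remain $\geq\eps$ after rounding), the detour per terminal is at most $\eps^2 D_{\min}$ by \cref{fact:centers}, and each tour has length at least $2D_{\min}$. The only cosmetic difference is that you do the accounting globally (summing over all terminals and comparing against a lower bound $\cost(S')\geq 2\eps n D_{\min}$), whereas the paper bounds $\cost(t)\leq(1+\eps/2)\cost(t')$ tour by tour and then sums; both yield the same conclusion.
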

\begin{proof}
Consider a given tour $t'\in S'$, and let $t$ be the corresponding tour in $S$. Tour $t'$ covers at most $1/\eps$ terminals, and the increase of the length of $t$ w.r.t. $t'$ due to each such terminal $v$ is at most $\eps^2\cdot D_{\min}\leq \eps^2 \cdot \cost(t')/2$ by Fact \ref{fact:centers}. The claim follows. 
\end{proof}

\cref{lem:big-demands} follows from \cref{lem:big-demands-1,lem:big-demands-2,lem:big-demands-conversion}.

\section{Algorithm for General Terminals (Proof of \cref{thm:main})} 

\label{sec:general}
\cref{thm:main} follows directly from \cref{lem:bounded-distance} and the following  \cref{thm:general-demands}. 
\begin{theorem}
\label{thm:general-demands}
Let $\eps>0$. \cref{alg:main} is a polynomial time $(2+O(\eps))$-approximation algorithm for the unsplittable Euclidean capacitated vehicle routing problem assuming that the set of terminals has bounded distance.
\end{theorem}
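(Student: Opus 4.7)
The plan is to fix a threshold $\thr=\thr(\eps)$ on the number of tours in $\OPT$, run two subroutines in parallel, and return the cheaper of their outputs. Subroutine~1 would be tailored to instances with $|\OPT|\ge\thr$ and Subroutine~2 to those with $|\OPT|<\thr$; both are needed because the algorithm cannot detect a priori which case holds.

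For Subroutine~1, the plan is to cluster the small terminals through the Voronoi decomposition of \cref{def:center}. Concretely, for every center $z\in Z$ I would compute (via the Euclidean TSP PTAS) a traveling salesman tour on the small terminals in the cell $c(z)$, treat $z$ as a virtual depot, and apply iterated tour partitioning with capacity $\eps$ to obtain intra-cluster routes of total demand at most $\eps$. Each intra-cluster route would then be abstracted into a single \emph{virtual terminal} located at $z$, whose demand equals the total demand of the route. The resulting instance, formed by the original big terminals together with the virtual terminals, would be fed to \cref{alg:big-terminals}, and its output would be expanded by replacing each virtual terminal by its intra-cluster route. The cost analysis would split into an \emph{internal} contribution (the TSP plus ITP inside the cells), bounded by $(1+\eps)\opt$ via \cref{lem:W}, and an \emph{external} contribution, obtained by applying \cref{lem:big-demands} to the clustered instance whose own optimum is at most $(1+\eps)\opt$ by \cref{lem:coarse} (which combines the Assignment Lemma \cref{lem:assignment} with \cref{lem:rad}). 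Summing the two pieces would yield $(2+O(\eps))\opt$.

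For Subroutine~2, I would adapt the Arora / Asano PTAS framework to the bounded-number-of-tours regime: since $|\OPT|\le\thr$ is constant, a randomly shifted quadtree dissection with $O_\eps(1)$ portals per square would suffice to approximate all $\OPT$-tours, and a dynamic program that tracks, for each square, the crossings of each of the $O(\thr)$ tours would have polynomial size. This would produce a near-optimal TSP-like skeleton that I would then convert into a feasible unsplittable CVRP solution via \cref{lem:rad}, giving a $(2+\eps)$-approximation.

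The main obstacle will be proving \cref{lem:W}, the internal cost bound for Subroutine~1. The natural charging argument would route the intra-cell TSP tour along the portions of $\OPT$-tours already crossing the cell, using the cell boundary of length $\le\eps^2 D_{\min}$ as a shortcut, and pay the ITP overhead from the within-cell distances of \cref{fact:centers} amortized over the tour capacity. Making this charging balance globally, without inflating the total by more than an $\eps$ factor, is precisely what will force the many-tours hypothesis $|\OPT|\ge\thr$ and the specific discretization scale chosen in \cref{def:center}.
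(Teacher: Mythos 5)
Your overall architecture is sound and your Subroutine~1 matches the paper's \cref{alg:manytours} in all essentials. One cosmetic difference: the paper does not run both subroutines and take the cheaper output; instead it computes the total demand $Y=\sum_v \demand(v)$, observes that $|\OPT|\in[Y,2Y]$ (by merging pairs of light tours), and branches on whether $Y\geq\thr$. Your ``run both, return cheaper'' strategy is an equally valid and arguably simpler way to dispatch between the two regimes, since both subroutines always output feasible solutions. A small technical point on Subroutine~1: applying ITP with capacity $\eps$ can produce clusters of demand strictly less than $\eps$ (e.g.\ two consecutive terminals of demand $0.6\eps$ each would end up in separate one-terminal segments), and such clusters are not ``big,'' so \cref{alg:big-terminals} cannot be invoked directly. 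The paper instead partitions each cell's TSP tour greedily into segments whose demand lies in $[\eps,2\eps]$ (only the last segment per cell may be lighter, and it is rounded up to $\eps$), which guarantees the clustered instance consists of big terminals at a cost of at most $K$ extra units of demand overall. This is a fixable detail but it matters for \cref{lem:big-demands} to apply.

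The genuine gap is in Subroutine~2. Converting a near-optimal TSP tour into a CVRP solution via \cref{lem:rad} cannot give $(2+\eps)\opt$: the bound from \cref{lem:rad} is $\cost(t_{\TSP})+\sum_v 4\,\dist(v)\,\demand(v)$, and while $\cost(t_{\TSP})\leq(1+\eps)\opt$ (a TSP tour is at most as long as the stitched CVRP tours), the radial term $\sum_v 4\,\dist(v)\,\demand(v)$ can only be bounded by $2\opt$ via the standard lower bound $\opt\geq\sum_v 2\,\dist(v)\,\demand(v)$. So this route yields at best a $(3+\eps)$-approximation, which is exactly the classical ITP guarantee and not good enough. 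You also do not address how the quadtree DP can track per-tour demand loads when demands are arbitrary reals --- without discretizing demands the state space is unbounded. The paper's \cref{alg:fewtours} resolves both issues with a specific pair of tricks: (i) round every demand \emph{down} to the nearest multiple of $\frac{1}{2n}$, so the DP (\cref{lem:Arora-modified}) has only $O(n)$ possible cumulative demands per tour and $O_\eps(1)$ tours to track, giving a polynomial-size state space and a $(1+\eps)$-approximation $\hat S$ for the rounded instance; (ii) because rounding was downward, each tour of $\hat S$ overfills by at most $\frac12$ when demands are restored, so each tour is duplicated into $t_1$ (the heaviest terminals up to demand $1$) and $t_2$ (the rest, which one shows has demand at most $1$), at the cost of a factor $2$. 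The factor $2$ thus comes from duplication of near-optimal tours, not from \cref{lem:rad}, and this is the step your proposal is missing.
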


\begin{algorithm}[h]
\caption{Algorithm for general terminals. Here $\thr=K\cdot C/\eps=32\pi C^3/\eps^5$.}
\label{alg:main}
\begin{algorithmic}[1]
\State $Y\gets \sum_{v\in V} \demand(v)$
\If{$Y\geq \thr$}
    \State $S\gets$ solution computed by \cref{alg:manytours}
\Else
    \State $S\gets$ solution computed by \cref{alg:fewtours}
\EndIf
\Return $S$
\end{algorithmic}
\end{algorithm}

In the rest of this section we prove \cref{thm:general-demands}. 

Define $\thr=K\cdot C/\eps=32\pi C^3/\eps^5$. We compute the overall demand $Y:=\sum_{v\in V}\demand(v)$.
We assume that $Y\geq 1$.\footnote{This assumption is without loss of generality, since otherwise the problem is equivalent to the Euclidean TSP and thus admits a PTAS~\cite{arora1998polynomial,mitchell1999guillotine}.}
It is easy to see that $Y$ is a 2-approximation on the number of tours in $\OPT$, i.e., the number of tours in $\OPT$ is in $[Y,2Y]$.\footnote{Indeed, if there are two tours in $\OPT$ whose total demand is at most 1, we can combine them into a single tour.}
If $Y\geq \thr$, then we apply Algorithm~\ref{alg:manytours} (see Section \ref{sec:manytours}), otherwise we apply Algorithm~\ref{alg:fewtours} (see Section \ref{sec:fewtours}).

By \cref{thm:manytours} (see Section \ref{sec:manytours}), the solution computed by \cref{alg:manytours} has cost at most $(2+O(\eps))\cdot \opt$ when the number of tours in $\OPT$ is at least $\thr$.
By \cref{thm:fewtours} (see Section \ref{sec:fewtours}), the solution computed by \cref{alg:fewtours} has cost at most $(2+O(\eps))\cdot\opt$ when the number of tours in $\OPT$ is at most $2\cdot\thr$.
Thus the cost of the solution $S$ is at most $(2+O(\eps))\cdot \opt$.

\subsection{Case 1: $\OPT$ Has Sufficiently  Many Tours}\label{sec:manytours}
In this subsection, we prove the following theorem.
\begin{theorem}
\label{thm:manytours}
Let $\eps>0$. \cref{alg:manytours} is a polynomial time $(2+O(\eps))$-approximation algorithm for the unsplittable Euclidean capacitated vehicle routing problem under the assumptions that the set of terminals has bounded distance and the number of tours in $\OPT$ is at least $\thr$.
\end{theorem}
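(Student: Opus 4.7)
The plan is to reduce the general instance to one with only big demands, for which \cref{thm:big-demands} gives a PTAS, at an additive cost of at most $(1+\eps)\opt$ for the reduction. Concretely, \cref{alg:manytours} would proceed as follows. For each center $z\in Z$, let $V_s(z)$ denote the small terminals in the cell $c(z)$; compute an approximate TSP tour on $V_s(z)\cup\{z\}$ and apply iterated tour partitioning with the \emph{virtual} depot $z$ and capacity $\eps$. This produces \emph{intra-cluster routes}, each a closed walk through $z$ covering terminals of total demand $\Theta(\eps)$. Collapse each such route into one \emph{cluster terminal} located at $z$ with demand equal to the route's total demand (by tuning the partitioning, every cluster demand lies in $[\eps/2,\eps]$, so all terminals of the reduced instance $I'$---made of the big terminals of $I$ together with the cluster terminals---are ``big'', and all sit at $O_\eps(1)$ points). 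Run \cref{alg:big-terminals} on $I'$ to obtain a solution $S'$ of cost $\leq (1+\eps)\opt(I')$ by \cref{lem:big-demands}, and build the final solution $S$ by splicing, for each cluster terminal in $S'$, its intra-cluster route into the tour covering it---free since both pass through $z$.

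The analysis rests on two inequalities. The first is the \emph{clustering-cost bound} $W:=\sum_z\cost(\text{intra-cluster routes at }z)\leq (1+\eps)\opt$. Standard ITP analysis gives $W\leq \sum_z \cost(\TSP_z) + (2/\eps)\sum_{v\in V_s}\dist(v,z(v))\,\demand(v)$; by \cref{fact:centers} the radial sum is $\leq (2/\eps)\cdot(\eps^2 D_{\min}/2)\cdot Y = \eps\, Y D_{\min}\leq \eps\opt/2$, using $\opt\geq 2YD_{\min}$ (valid since $\OPT$ has $\geq Y\geq\thr$ tours of length $\geq 2D_{\min}$). The TSP sum $\sum_z\cost(\TSP_z)$ is controlled by shortcutting $\OPT$ onto the set of small terminals and splitting the resulting tour at cell boundaries; the $O(K)$ boundary crossings contribute $O(K\eps^2 D_{\min})=O(\eps)\opt$. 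The second is the \emph{coarse-feasibility bound} $\opt(I')\leq (1+\eps)\opt$, via the Assignment Lemma (\cref{lem:assignment}) with $A=\OPT$, $B$ the cluster terminals, $(a,b)\in E$ iff $a$ covers some small terminal of $b$, $w(a,b)$ the total demand of $b$'s small terminals on $a$, and $w(b)=\demand(b)\leq\eps$. The resulting map $f:B\to A$ defines a candidate solution for $I'$ by editing each $\OPT$ tour $a$: keep its big terminals, skip its small terminals, and detour from a residual vertex of $a$ to every $z_b$ with $f(b)=a$ (distance $\leq \eps^2 D_{\min}/2$ by \cref{fact:centers}). The $|B|\leq 2Y/\eps$ detours contribute $\leq \eps\opt$, and the per-tour capacity overflow $\leq\eps$ from the Assignment Lemma is absorbed by an auxiliary argument in the spirit of \cref{lem:rad}.

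Summing, $\cost(S)\leq W+\cost(S')\leq (1+\eps)\opt + (1+\eps)\opt(I')\leq (2+O(\eps))\opt$. The main obstacle I foresee is absorbing the $\leq\eps$ capacity violation from the Assignment Lemma inside the $(1+\eps)$-factor of the coarse-feasibility bound: a naive reroute would cost $\Omega(D_{\max})$ per violating tour and add up to $\Omega(C)\opt$, so a finer argument is needed---likely exploiting the condition $Y\geq\thr=KC/\eps$ so that the total overflow can be handled by $O(\eps Y)$ auxiliary tours whose cost is charged against $\opt\geq 2YD_{\min}$. Making the Voronoi cell decomposition cooperate with shortcut-TSP arguments (usually stated for axis-aligned dissections) is the other delicate but essentially standard ingredient.
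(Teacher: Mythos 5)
Your overall structure matches the paper's: Voronoi cells around the centers $Z$, a TSP-based clustering of the small terminals inside each cell into groups of demand $\Theta(\eps)$, collapsing each group to a clustered terminal at the cell center, solving the resulting big-demand instance $I'$ via \cref{lem:big-demands}, and splicing the clusters back in. Your clustering-cost bound and your use of the Assignment Lemma (\cref{lem:assignment}) to build a bi-criteria solution for $I'$ are also both the right moves and match the paper's \cref{lem:W,lem:coarse}. But you have correctly flagged and then left unresolved the step that is actually the crux of \cref{lem:coarse}: repairing the capacity violation coming out of the Assignment Lemma. The repair you gesture at---$O(\eps Y)$ auxiliary tours charged against $\opt \geq 2YD_{\min}$---does not work as stated: each auxiliary tour can cost up to $2D_{\max} = 2C D_{\min}$, so the total would be $O(\eps Y \cdot C D_{\min}) = O(\eps C)\cdot\opt$, and $C = (1/\eps)^{1/\eps}$ is enormous, not a constant you can absorb.

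The missing idea is structural, not just a ``finer argument.'' After the Assignment Lemma, the paper removes from each overloaded tour a minimal set of clustered terminals to restore feasibility, collecting them into a set $R$. Two facts then make the repair cheap. First, all terminals in $R$ sit at the at most $K$ centers, so a \emph{single} TSP tour $t_{\TSP}$ through $R$ (visiting centers consecutively) costs at most $K \cdot 2D_{\max}$, which is $\leq \eps\cdot\opt$ by \cref{fact:manytours}---this is exactly why $\thr$ is set to $KC/\eps$. Second, applying the radial ITP bound \cref{lem:rad} to $t_{\TSP}$ yields capacity-feasible tours covering $R$ at extra cost $\sum_{v\in R}4\dist(v)\demand(v)$, and this radial sum is small because the removed demand per overloaded tour $t\in\tilde S$ is at most $\demand(t)-\demand(t') \leq 4\eps$ (capacity $\leq 1+2\eps$ before, $>1-2\eps$ after minimal removal of $\leq 2\eps$-demand clusters) together with $2\dist(v)\leq\cost(t)$, giving $\sum_{v\in R}4\dist(v)\demand(v)\leq 8\eps\cdot\cost(\tilde S)$. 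Without both ingredients---the $K\cdot 2D_{\max}$ bound on the connecting tour and the per-tour $4\eps$ removed-demand bound feeding into \cref{lem:rad}---the repair cost is not controlled. A smaller but related gap: you assume ``by tuning'' that every cluster demand lands in $[\eps/2,\eps]$, but ITP (or segment partitioning) inevitably leaves at most one undersized leftover group per cell; the paper handles these explicitly as ``bad'' clusters, giving each its own depot tour, at total cost $\leq K\cdot 2D_{\max}\leq\eps\cdot\opt$ again by \cref{fact:manytours}.
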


\begin{algorithm}[h]
\caption{Algorithm when $\OPT$ has sufficiently many tours.
Let $I$ be the input instance.}
\label{alg:manytours}
\begin{algorithmic}[1]
\For{each center $z\in Z$}
	\State Let $Q(z)$ denote the small terminals in the cell $c(z)$ of $z$
	\State Compute a $(1+\eps)$-approximate traveling salesman tour $t(z)$ on $Q(z)$ \Comment{\cite{arora1998polynomial,mitchell1999guillotine}}
	\State Partition $t(z)$ into subtours of consecutive terminals, each with total demand in $[\eps,2\eps]$, except possibly the last subtour with smaller demand; each subtour is called a \emph{segment} of $t(z)$
	\For{each segment of $t(z)$}
		\State Let $d$ be the total demand of the segment
		\If{$d<\eps$}
		    $d\gets \eps$
		\EndIf
		\State Create a \emph{clustered} terminal located at $z$ and of demand $d$
	\EndFor
\EndFor
\State Let $I'$ be the instance of the unsplittable Euclidean CVRP induced by big and clustered terminals
\State Compute an optimal solution $S'$ to $I'$ \Comment{\cref{lem:big-demands}}.
\For{each tour $t'\in S'$ and for each clustered terminal $v$ covered by $t'$}
\State Let $z\in Z$ denote the center where $v$ is located
\State Let $x$ denote the  segment of $t(z)$ corresponding to $v$
\State Add to $t'$ the segment $x$, plus the edges between the endpoints of $x$ and the center $z$
\EndFor
\State \Return the resulting solution $S$
\end{algorithmic}
\end{algorithm}

\paragraph{A New Clustering of Small Terminals.}
We consider a Voronoi-cell decomposition of the plane with respect to the centers $Z$, where $c(z)$ denotes the cell of $z$ (breaking ties arbitrarily). Let $Q(z)$ be the small terminals contained in $c(z)$. For each cell $c(z)$, we cluster the small terminals $Q(z)$ inside that cell such that each resulting cluster has total demand roughly $\eps$.
For each cluster of small terminals, we replace them by a single \emph{clustered} terminal at the center $z$ of that cell, whose demand is the total demand of those small terminals.
This results in a new instance $I'$ consisting uniquely of big terminals, which can be solved near-optimally by \cref{lem:big-demands}.

How do we achieve a good clustering of the small terminals in each cell $c(z)$? We first compute a traveling salesman tour on the small terminals in that cell, and then greedily partition that tour into segments of consecutive terminals.

\begin{fact}
\label{fact:manytours}
Assume that the number of tours in $\OPT$ is at least $\thr$.
We have $K\cdot 2D_{\max}\leq \eps\cdot \opt.$
\end{fact}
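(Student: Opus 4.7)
The plan is to give a short lower-bound argument on $\opt$ using just two ingredients: a per-tour length bound and the bounded-distance assumption.

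First I would observe that every tour in $\OPT$ starts at the depot, visits at least one terminal (whose distance to $O$ is at least $D_{\min}$ by definition), and returns to the depot. Hence by the triangle inequality each tour has length at least $2 D_{\min}$. Combining this with the hypothesis that $\OPT$ contains at least $\thr$ tours, I obtain
\[
\opt \;\geq\; \thr \cdot 2 D_{\min}.
\]

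Next I would use the bounded-distance assumption $D_{\max}/D_{\min} \leq C$, which gives $D_{\min} \geq D_{\max}/C$. Plugging this into the previous inequality and recalling $\thr = K\cdot C/\eps$, one gets
\[
\opt \;\geq\; \frac{\thr \cdot 2 D_{\max}}{C} \;=\; \frac{K \cdot C/\eps \cdot 2 D_{\max}}{C} \;=\; \frac{K \cdot 2 D_{\max}}{\eps},
\]
which rearranges to $K\cdot 2 D_{\max} \leq \eps \cdot \opt$, as claimed.

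There is no real obstacle here: the only subtlety is to correctly pair the "many tours" hypothesis with the bounded-distance hypothesis, since $\thr$ was chosen precisely so that the factor $C$ coming from the ratio $D_{\max}/D_{\min}$ is absorbed, leaving an $\eps$-factor saving on the $K\cdot 2D_{\max}$ term that will later be used as a negligible additive error in \cref{thm:manytours}.
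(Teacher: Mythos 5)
Your proof is correct and essentially identical to the paper's: both use the per-tour lower bound of $2D_{\min}$ together with the many-tours hypothesis to get $\opt \geq \thr\cdot 2D_{\min}$, then apply the bounded-distance ratio $D_{\max}\leq C\cdot D_{\min}$ and the definition $\thr = KC/\eps$ to conclude. The only cosmetic difference is that the paper chains the inequalities starting from $K\cdot 2D_{\max}$ while you start from $\opt$ and rearrange at the end.
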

\begin{proof}
$K\cdot 2D_{\max}=\frac{\thr\cdot \eps}{C}\cdot 2D_{\max}\leq \thr\cdot\eps\cdot 2D_{\min}\leq \eps\cdot \opt$ since $\OPT$ uses at least $\thr$ tours of length at least $2D_{\min}$ each.
\end{proof}

The following lemma shows that the clustering of small terminals is not too expensive.

\begin{lemma}
\label{lem:W}
Let $W$ denote the total cost of the segments and their connections to the centers. Then  $W\leq (1+O(\eps))\opt$.
\end{lemma}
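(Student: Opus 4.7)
My plan is to split $W = W_1 + W_2$, where $W_1 = \sum_z \cost(t(z))$ is the sum of TSP tour costs in the cells and $W_2$ is the total cost of the edges joining segment endpoints to their cell centers, and then to bound $W_2 \leq O(\eps)\opt$ and $W_1 \leq (1+O(\eps))\opt$.

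The bound on $W_2$ is routine. In each cell $c(z)$ the number of segments is at most $\demand(Q(z))/\eps + 1$ (since every segment except possibly the last has demand at least $\eps$), so the total number of segments is at most $Y/\eps + K$. Each segment contributes two edges of length at most $\eps^2 D_{\min}/2$ by \cref{fact:centers}, giving $W_2 \leq (Y/\eps + K)\eps^2 D_{\min}$. Combining the standard lower bound $\opt \geq 2YD_{\min}$ with \cref{fact:manytours} shows that both resulting terms are $O(\eps)\opt$.

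For $W_1$, since $t(z)$ is a $(1+\eps)$-approximate TSP tour on $Q(z)$, it suffices to show $\sum_z \opt_{\TSP}(Q(z)) \leq (1+O(\eps))\opt$. For each cell $c(z)$, consider the portions of the edges of $\OPT$ that lie inside $c(z)$. They form a collection of ``excursion paths'', each entering $c(z)$ at a boundary crossing, visiting some terminals of $V\cap c(z)$, and exiting at another boundary crossing; let $\mathrm{inside}(z)$ be their total length. Since every edge of $\OPT$ contributes at most its own length to these sums, $\sum_z \mathrm{inside}(z) \leq \opt$. I then form the multigraph $M(z)$ on $(V\cap c(z)) \cup \{\text{boundary endpoints}\}$ whose edges are the excursion paths together with the full boundary cycle $\bd(z)$. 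Each interior terminal has even degree $2$, while each of the $2|P(z)|$ boundary endpoints has odd degree $3$ (one excursion edge plus two boundary arcs). Adding a minimum-weight perfect matching on these odd vertices --- whose cost is at most half of $\bd(z)$, i.e., at most $\eps^2 D_{\min}/2$, by pairing consecutive endpoints along the boundary --- produces an Eulerian multigraph with total edge cost at most $\mathrm{inside}(z) + \bd(z) + \eps^2 D_{\min}/2 \leq \mathrm{inside}(z) + (3/2)\eps^2 D_{\min}$. Triangle-inequality shortcutting of an Eulerian tour then yields a tour on $V\cap c(z)$, and hence on $Q(z)$, of at most this cost. Summing over the at most $K$ nonempty cells and invoking \cref{fact:manytours} gives $\sum_z \opt_{\TSP}(Q(z)) \leq \opt + O(K\eps^2 D_{\max}) \leq (1+O(\eps^3))\opt$, whence $W_1 \leq (1+O(\eps))\opt$ and therefore $W \leq (1+O(\eps))\opt$.

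The main obstacle is this per-cell construction. The naive strategy of closing each excursion separately with its own chord of length at most $\eps^2 D_{\min}$ contributes $|P(z)|\cdot \eps^2 D_{\min}$ per cell, and $\sum_z |P(z)|$ can be as large as $n$ or (number of tours in $\OPT$)$\cdot K$ --- which is not $O(\opt/(\eps D_{\min}))$ in general when many terminals have tiny demand. Using $\bd(z)$ as a single shared connector via the Eulerian construction replaces the $|P(z)|$ factor by a single perimeter term per cell, namely $O(\eps^2 D_{\min})$, which is then absorbed by \cref{fact:manytours}.
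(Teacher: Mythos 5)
Your proof is correct and follows essentially the same route as the paper: split $W$ into TSP costs plus connection costs, bound connections by the segment count and \cref{fact:centers}, and bound each cell's TSP cost by $\opt(z)+\tfrac{3}{2}\bd(z)$, then absorb the boundary and extra-segment terms via $\opt\geq 2\demand(V)D_{\min}$ and \cref{fact:manytours}. The one difference is that where you re-derive the per-cell bound $\opt_{\TSP}(Q(z))\leq \mathrm{inside}(z)+\tfrac{3}{2}\bd(z)$ from scratch via the boundary-cycle-plus-matching Eulerian construction, the paper simply cites this as a known result of Karp~\cite{karp1977probabilistic}; your argument is exactly the standard proof of that lemma.
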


\begin{proof}
Consider a center $z\in Z$.
Let $t^*(z)$ denote an optimal traveling salesman tour on $Q(z)$.
Let $\opt(z)$ denote the cost of the part of $\OPT$ that is inside the cell $c(z)$ of $z$.
Recall that $\bd(z)$ is the boundary of $c(z)$.
By Karp~\cite{karp1977probabilistic} and using $\bd(z)\leq \eps^2\cdot D_{\min}$ (\cref{fact:centers}), we have \[\cost(t^*(z))\leq \opt(z)+\frac{3}{2}\cdot \bd(z)\leq  \opt(z)+\frac{3}{2}\cdot \eps^2\cdot D_{\min}.\]
Since the traveling salesman tour $t(z)$ computed in \cref{alg:manytours} is a $(1+\eps)$-approximation, we have 
\[\cost(t(z))\leq (1+\eps)\cdot \cost(t^*(z))\leq (1+\eps)\cdot \left(\opt(z)+\frac{3}{2}\cdot \eps^2\cdot D_{\min}\right).\]

Next, we analyze the costs of the connections to the center $z$.
Observe that for each segment of $t(z)$, the total cost to connect the two endpoints of the segment to $z$ is at most $\eps^2\cdot D_{\min}$ by \cref{fact:centers}.
Since the demand of each segment is at least $\eps$ excluding possibly one segment, the number of segments is at most $\frac{\demand(Q(z))}{\eps}+1$.
Thus the cost of the connections to the center $z$ is at most 
\[ \left(\frac{\demand(Q(z))}{\eps}+1\right)\cdot \eps^2\cdot D_{\min}.\]

Let $W(z)$ denote the total cost of the segments in $c(z)$ and their connections to $z$.
We have 
\[W(z)\leq (1+\eps)\cdot \left(\opt(z)+\frac{3}{2}\cdot \eps^2\cdot D_{\min}\right) + \left(\frac{\demand(Q(z))}{\eps}+1\right)\cdot \eps^2\cdot D_{\min}.\]
Summing over all centers $z\in Z$ and since the number of centers is at most $K$, we have 
\[W=\sum_z W(z)\leq (1+\eps)\cdot \opt  + \frac{\demand(V)}{\eps}\cdot \eps^2 \cdot D_{\min}+ 3 K \eps^2 \cdot D_{\min}. 
\]
Observe that $\opt\geq \demand(V)\cdot 2 D_{\min}$, since $\OPT$ must contain at least $\demand(V)$ tours of length at least $2D_{\min}$ each. Then \[\frac{\demand(V)}{\eps}\cdot \eps^2 \cdot D_{\min}\leq \frac{\eps\cdot\opt}{2}.\]
By \cref{fact:manytours},
\[
3K\eps^2\cdot  D_{\min}\leq 3K\cdot D_{\max}\leq \frac{3\eps\cdot\opt}{2}.\]
Therefore, $W\leq (1+3\eps)\cdot \opt$.
\end{proof}

\begin{lemma}
\label{lem:coarse}
Let $\opt(I')$ denote the cost of an optimal solution to the instance $I'$.
We have $\opt(I')\leq (1+O(\eps))\cdot \opt$. 
\end{lemma}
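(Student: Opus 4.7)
The plan is to exhibit a feasible solution $S^*$ for $I'$ of cost at most $(1+O(\eps))\cdot\opt$, constructed by carefully modifying $\OPT$. The key tool is the Assignment Lemma (\cref{lem:assignment}).

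First, I will dispose of the ``inflated'' clusters --- those where the assigned demand $\eps$ strictly exceeds the true total $\demand(V_x)$ (the at-most-one-per-cell segments that were rounded up in line~8 of \cref{alg:manytours}). There are at most $K$ such clusters, and each can be served by its own out-and-back tour from the depot to the corresponding center, at a total cost of at most $2K\,D_{\max}=O(\eps)\cdot\opt$ by \cref{fact:manytours}.

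Second, for the non-inflated clusters I will invoke \cref{lem:assignment} with $A:=\OPT$ and $B:=\{\text{non-inflated clusters}\}$. For every cluster $x$ located at $z(x)$ with constituent small terminals $V_x$, I include the edge $(a,x)\in E$ iff $V_x\cap a\neq\emptyset$ and set $w(x):=d(x)=\demand(V_x)\in[\eps,2\eps]$ and $w(a,x):=\demand(V_x\cap a)$. The identity $\sum_{a}w(a,x)=\demand(V_x)=w(x)$ fulfills the hypothesis, so I obtain an assignment $f:B\to A$ satisfying
\[
\sum_{x:f(x)=a}w(x)\;\le\;\sum_{x:(a,x)\in E}w(a,x)+\max_x w(x)\;\le\;(\text{small demand of }a)+2\eps.
\]
From each $a\in\OPT$ I then build $a^*$ by (i)~for each cluster $x$ with $f(x)=a$, picking some $v\in V_x\cap a$ (which exists since $(a,x)\in E$) and inserting the detour $v\to z(x)\to v$ of length at most $\eps^2 D_{\min}$ (by \cref{fact:centers}); and (ii)~shortcutting all small terminals, a cost-non-increasing operation. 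After step (ii), $a^*$ visits exactly the big terminals of $a$ together with the clustered terminals $\{z(x):f(x)=a\}$, which is precisely what $I'$ requires. Summing the detour costs, bounding the number of clusters by $\demand(V)/\eps+K$, and using $\opt\ge 2\demand(V)D_{\min}$ together with \cref{fact:manytours}, I get $\sum_a\cost(a^*)\le(1+O(\eps))\,\opt$, exactly analogously to the analysis of \cref{lem:W}.

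The main obstacle I expect is the capacity: the Assignment Lemma guarantees only $\demand(a^*)\le\demand(a)+2\eps\le 1+2\eps$, so some tours may overflow the unit capacity by up to $2\eps$. To recover strict feasibility I will perform a rebalancing step. Since $\sum_a\demand(a^*)\le Y\le|\OPT|$, the aggregate excess is matched by the aggregate deficit, so overflow clusters can be reassigned to under-used tours. The delicate part is controlling the cost of such reassignments, which I plan to bound via a second application of \cref{lem:assignment} that matches each overflow cluster at a center $z$ to a nearby under-used tour that already passes through the same cell $c(z)$, paying only an $O(\eps^2 D_{\min})$ local detour per reassignment. The many-tours hypothesis $Y\ge\thr$, through \cref{fact:manytours}, ensures that such nearby tours are plentiful and that the accumulated rebalancing cost stays within $O(\eps)\cdot\opt$. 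Combining the inflated-cluster tours, the tours $\{a^*\}$, and the rebalancing yields a feasible $S^*$ for $I'$ of total cost $(1+O(\eps))\cdot\opt$, establishing the lemma.
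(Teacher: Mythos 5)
Your construction of the bicriteria solution follows the paper's proof closely: handling the (at most $K$) rounded-up clusters by out-and-back tours is exactly the paper's treatment of ``bad'' clustered terminals, and applying the Assignment Lemma with $A = \OPT$ and $B$ the remaining (``good'') clusters, with $w(a,b)$ the demand of $V_x\cap a$, is precisely the paper's Phase~1; the cost analysis of the detours is also the same. Up to this point the proposal is correct.

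The gap is in your final ``rebalancing'' step. You observe correctly that the Assignment Lemma only gives $\demand(a^*)\le 1+2\eps$ and that this must be repaired, but your plan --- reassign overflow clusters to nearby under-used tours, controlling the cost by a second application of \cref{lem:assignment} matching each overflow cluster to an under-used tour that passes through the same cell --- is unsupported and will fail on adversarial inputs. Nothing guarantees that an under-used tour exists in the same cell as an overflow cluster: every tour visiting $c(z)$ could itself be overfull. The Assignment Lemma requires $N(b)\neq\emptyset$ for all $b$, which you cannot ensure with this bipartite graph, and the aggregate balance argument ($\sum_a\demand(a^*)\le Y\le|\OPT|$) says nothing about \emph{local} availability. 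The many-tours hypothesis through \cref{fact:manytours} only bounds $K\cdot 2D_{\max}\le\eps\cdot\opt$; it does not make nearby under-used tours ``plentiful.'' The paper resolves the overflow much more cheaply and robustly: from each overflowing tour, remove clustered terminals until it fits (removing total demand at most $4\eps$ per tour, since $\demand(t)\le 1+2\eps$ while the trimmed tour still has demand $>1-2\eps$), collect all removed clustered terminals into a set $R$, route a single TSP tour on $R$ (cost at most $K\cdot 2D_{\max}\le\eps\cdot\opt$ since all of $R$ lives at the $K$ centers), and apply \cref{lem:rad} (iterated tour partitioning) to split it into feasible tours; the radius-cost term $\sum_{v\in R}4\dist(v)\demand(v)$ is then bounded by $8\eps\cdot\cost(\tilde S)$ using $2\dist(v)\le\cost(t)$ per host tour. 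You would need to replace your rebalancing plan with this (or an equivalently concrete) argument to close the proof.
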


\begin{proof}
First, we construct a bi-criteria solution $\tilde S$ to the instance $I'$ where each tour has total demand at most $1+2\eps$. 
Then we transform $\tilde S$ into a solution $S'$ to the instance $I'$ without capacity violation, and we show that the cost of $S'$ is at most $(1+O(\eps))\cdot\opt$.

We say that a clustered terminal is \emph{good} if its corresponding segment has total demand at least $\eps$, and \emph{bad} otherwise\footnote{We remark that here we do not consider the rounding up to $\eps$ of the quantity $d$ in the algorithm.}. We construct the solution $\tilde S$ in two phases.
\begin{enumerate}
    \item 
Define a bipartite graph $G=(A\uplus B,E)$ as follows.
Let $A$ denote the set of tours in an optimal solution to the instance $I$.
Let $B$ denote the set of good clustered terminals.
There is an edge in $G$ between a tour $a\in A$ and a clustered terminal $b\in B$ if and only if $a$ covers a small terminal in the segment corresponding to $b$. We define $w(b)$ as the demand of the segment $b$. We also define $w(a,b)$ as the part of $w(b)$ which is covered by $a$. Notice that $w(b)\leq 2\eps$, $w(b)=\sum_{a\in N(b)}w(a,b)$, and $N(b)\neq \emptyset$ for each $b\in B$. We apply the Assignment Lemma (\cref{lem:assignment}) to obtain a function $f: B\to A$ such that each clustered terminal $b\in B$ is assigned to exactly one tour $a\in A$ with $(a,b)\in E$ and for each tour $a\in A$, the total demand of the big terminals on $a$ and the clustered terminals that are assigned to $a$ by $f$ is at most $1+2\eps$.

\item For each bad clustered terminal, we connect it to the depot by a separate tour.
\end{enumerate}

Now we analyze the cost of $\tilde S$.
Consider a tour $a\in A$ in the first phase.
Recall that each clustered terminal $b\in B$ is good, so has demand at least $\eps$.
Thus there are at most $1/\eps$ clustered terminals $b\in B$ with $f(b)=a$.
For each $b\in B$ such that $f(b)=a$, by definition, $a$ contains a small terminal $v$ in the segment corresponding to $b$.
The cost to connect $b$ to $a$ is at most twice the $v$-to-$b$ distance, which is at most $\eps^2\cdot D_{\min}$ by \cref{fact:centers}.
So the cost to connect $b$ to $a$ over all $b\in B$ such that $f(b)=a$ is at most $\eps\cdot D_{\min}$, which is at most $\eps\cdot\cost(a)$.
Summing over all tours $a\in A$, the overall cost of the connection is at most $\eps\cdot \opt$.
In the second phase, at most one separate tour is created for each cell. Since the number of cells is at most $K$, the total connection in this phase has cost at most $K\cdot 2 D_{\max}$, which is at most $\eps\cdot \opt$ by \cref{fact:manytours}.
Therefore, $\cost(\tilde S)\leq (1+2\eps)\cdot \opt$.

Next, we construct the solution $S'$ from the solution $\tilde S$.
Consider any tour $t$ in $\tilde S$.
If $t$ exceeds the capacity, then we remove from $t$ clustered terminals until $t$ is within the capacity.
Let $t'$ be the resulting tour.
We include $t'$ into $S'$.
Let $R$ be the set of removed clustered terminals from the tours in $\tilde S$.
We create additional tours to cover the terminals in $R$ as follows.
First, we connect those terminals to the depot by a traveling salesman tour $t_{\TSP}$ visiting $R$, such that the terminals in $R$ located at the same center are visited consecutively by  $t_{\TSP}$.
Next, we apply \cref{lem:rad} on $t_{\TSP}$ to obtain a set of tours within the tour capacity and covering the terminals in $R$.
We include those tours into $S'$.

From the construction and \cref{lem:rad}, 
\[\cost(S')\leq \cost(\tilde S)+\cost(t_{\TSP})+\sum_{v\in R}4\cdot\dist(v)\cdot \demand(v).\]
Since there are at most $K$ centers, all clustered terminals are located at centers, and  using \cref{fact:manytours}, 
\[\cost(t_{\TSP})\leq K\cdot 2D_{\max}\leq \eps\cdot \opt.\]
At the same time, we have 
\begin{align*}
    & \sum_{v\in R}4\cdot\dist(v)\cdot \demand(v) =\sum_{t\in \tilde S} \sum_{v\in R\text{ and } v\in t} 4\cdot\dist(v)\cdot \demand(v)\\
    \leq & \sum_{t\in \tilde S} \sum_{v\in R\text{ and } v\in t} 2\cdot\cost(t)\cdot \demand(v) 
    \leq  \sum_{t\in \tilde S} 4\eps\cdot 2\cdot\cost(t) =8\eps\cdot\cost(\tilde S).
\end{align*}
In the first inequality above we used the fact that $2\cdot\dist(v)\leq \cost(t)$. In the last inequality above we used the fact that $\demand(t)\leq 1+2\eps$ and $\demand(t')>1-2\eps$, where $t'$ is the tour obtained from $t$ by removing a minimal subset of clustered terminals (covered by $t$) in order to enforce $\demand(t')\leq 1$. Thus \[\sum_{v\in R\text{ and } v\in t}\demand(v)=\demand(t)-\demand(t')\leq 4\eps.\] 
Altogether, for $\eps>0$ small enough, 
\[\cost(S')\leq \eps\cdot \opt+ (1+8\eps)\cost(\tilde S)<(1+12\eps)\cdot \opt.\]
\end{proof}

The cost of output solution in Algorithm~\ref{alg:manytours} is at most $W+(1+\eps)\cdot \opt(I')$ by Lemma \ref{lem:big-demands}. This is at most $(2+O(\eps))\cdot \opt$ by \cref{lem:W,lem:coarse}.
Hence Algorithm~\ref{alg:manytours} has approximation ratio $2+O(\eps)$.

\subsection{Case 2: $\OPT$ Has a Bounded Number of Tours}\label{sec:fewtours}
In this subsection, we prove the following theorem.

\begin{theorem}
\label{thm:fewtours}
Let $\eps>0$.
\cref{alg:fewtours} is a polynomial time $(2+O(\eps))$-approximation algorithm for the unsplittable Euclidean capacitated vehicle routing problem under the assumption that the number of tours in $\OPT$ is at most $2\cdot \thr$.
\end{theorem}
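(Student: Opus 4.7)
Since the number of tours in $\OPT$ is at most the constant $k := 2\thr$, we have room for a dynamic program whose complexity may depend on $k$. The plan is to adapt Arora's PTAS for Euclidean TSP \cite{arora1998polynomial} in the style of Asano et al.\ \cite{asano1997covering}, who gave a PTAS for unit-demand Euclidean CVRP when $k = O(\log n / \log\log n)$.

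First, I would apply Arora's framework: rescale coordinates to a bounding box of side $O(n/\eps)$, perform a random shift, and build a $(1+\eps)$-shifted quadtree of depth $O(\log n)$, with a constant number of portals per square side (depending only on $\eps$). By the patching lemma, we may restrict attention to portal-respecting solutions in which each of the at most $k$ tours crosses each square boundary only at portals and at most $O(1/\eps)$ times, losing a $(1+\eps)$ factor in expected cost.

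Next, I would discretize demands by rounding each $\demand(v)$ to the nearest multiple of $\delta := \eps/(kn)$, incurring at most $O(\eps)$ slack in per-tour capacity. Then run a bottom-up DP on the quadtree: for each square $\sigma$ and each of the $k$ tours, the state consists of the tour's portal-crossing pattern on $\partial\sigma$ together with the rounded cumulative demand it has absorbed within $\sigma$. There are constantly many portal-crossing patterns per tour and polynomially many demand levels, so the joint state over $k$ tours has polynomial size, and the DP runs in polynomial time. The root-level DP enforces per-tour capacity $\leq 1 + O(\eps)$ (a bicriteria slack from rounding), producing a portal-respecting solution of cost $\leq (1+\eps)\cdot\opt$.

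The hard part is coupling Arora's geometric DP with the bin-packing-like per-tour capacity constraints across all $k$ tours simultaneously; this is made tractable precisely because $k$ is a constant depending only on $\eps$ and because the demand state is discretized. After the DP, capacity violations are repaired by peeling off from each over-full tour a subset of terminals of total demand $O(\eps)$ and covering them by additional tours obtained from a TSP on the removed terminals together with \cref{lem:rad}; this incurs only $O(\eps)\cdot \opt$ extra cost thanks to the radial bound $2\sum_v \dist(v)\demand(v) \leq \opt$. Combining the $(1+\eps)\cdot\opt$ DP cost with the $O(\eps)\cdot\opt$ repair cost gives the claimed $(2+O(\eps))$-approximation---in fact, this analysis suggests a $(1+O(\eps))$ ratio, but the weaker bound already suffices to close the proof of \cref{thm:main}.
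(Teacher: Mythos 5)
Your high-level plan --- adapt Arora's quadtree DP in the style of Asano et al., discretize demands so that the per-square state is polynomial, and exploit that the number of tours is a constant $k\le 2\thr$ --- is exactly what the paper does in \cref{lem:Arora-modified}. But your repair step after the DP has a genuine gap, and it is precisely there that the factor of $2$ must come from. You propose to peel off from each tour a set $R$ of terminals of total demand $O(\eps)$ and to cover $R$ by a TSP tour plus \cref{lem:rad}, claiming an extra cost of only $O(\eps)\cdot\opt$. That bound does not hold in the few-tours regime. The radial term $\sum_{v\in R}4\dist(v)\demand(v)$ is bounded only by $4D_{\max}\cdot\demand(R)=O(\eps D_{\max})$; comparing against $\opt\ge 2D_{\min}$ gives a ratio of order $\eps\cdot D_{\max}/D_{\min}\le \eps C$, and with $C=(1/\eps)^{1/\eps}$ this is unbounded as $\eps\to 0$, not $O(\eps)$. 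Worse, $\cost(t_{\TSP})$ has no useful bound at all here: the removed points are genuine (possibly many, tiny-demand) terminals scattered over the annulus, not $O(K)$ clustered terminals sitting at centers as in \cref{lem:coarse}, and \cref{fact:manytours}, which is what makes the analogous TSP term small there, explicitly requires $\OPT$ to have \emph{at least} $\thr$ tours --- the opposite of the present case. Your aside that the argument ``suggests a $(1+O(\eps))$ ratio'' is also a red flag: with all terminals collocated the few-tours case already contains Partition/Bin Packing with a constant number of bins, so a PTAS there would contradict the $(3/2)$-hardness. A smaller issue: you round demands to the \emph{nearest} multiple of $\delta$; rounding up can increase the rounded optimum above $\opt$, so one must round \emph{down}.

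The paper's fix is simpler and avoids \cref{lem:rad} entirely. Round each demand down to the next multiple of $1/(2n)$ (so any feasible solution to the original remains feasible, hence $\opt(\hat I)\le\opt$), solve $\hat I$ to $(1+\eps)$ by the DP, and then replace every output tour $t$ by \emph{two} identical copies $t_1,t_2$: sort the terminals on $t$ by non-increasing unrounded demand, load $t_1$ greedily up to capacity $1$, and put the rest on $t_2$. Since the unrounded demand of $t$ is at most $1+\ell/(2n)\le 3/2$, a short case analysis shows $\demand(t_1)\ge 1/2$, so $\demand(t_2)\le 1$ as well. This pays exactly a factor of $2$ on $\cost(\hat S)$ and immediately gives $\cost(S)\le 2(1+\eps)\opt$, with no dependence on $C$, $K$, or $\thr$ in the repair step.
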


\begin{algorithm}[h]
\caption{Algorithm when $\OPT$ has a bounded number of tours.}
\label{alg:fewtours}
\begin{algorithmic}[1]
\State Round each demand down to the next smaller integer multiple of $\frac{1}{2n}$, obtaining an instance $\hat I$.
\State Compute a $(1+\eps)$-approximate solution $\hat S$ to $\hat I$\Comment{\cref{lem:Arora-modified}}
\State $S\gets \emptyset$
\For{each tour $t$ in $\hat S$}
\State Make two copies $t_1$ and $t_2$ of $t$
\State Let $v_1,\dots,v_\ell$ be the terminals covered by $t$ in non-increasing order of their unrounded demands
\State Let $i$ be the largest integer in $[1,\ell]$ such that $\sum_{j=1}^{i}\demand(v_j)\leq 1$
\State Let $t_1$ cover the demands of $v_1,\dots,v_i$
\State Let $t_2$ cover the demands of $v_{i+1},\dots,v_\ell$
\State $S\gets S\cup\{t_1,t_2\}$
\EndFor
\State \Return $S$
\end{algorithmic}
\end{algorithm}

\begin{lemma}
\label{lem:Arora-modified}
Let $\eps>0$. There is a polynomial time $(1+\eps)$-approximation algorithm for the unsplittable Euclidean capacitated vehicle routing problem under the assumptions that the number of tours in $\OPT$ is at most $2\cdot \thr$ and each demand is an integer multiple of $\frac{1}{2n}$.
\end{lemma}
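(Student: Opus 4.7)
The plan is to adapt Arora's PTAS framework for Euclidean TSP to the unsplittable CVRP, exploiting both hypotheses of the lemma: $\OPT$ uses only $k := 2\thr = O_\eps(1)$ tours, and every demand is an integer multiple of $\frac{1}{2n}$. The former lets us treat tour identity as part of the DP state with only constant-factor overhead, and the latter lets us track each tour's accumulated demand using only $\mathrm{poly}(n)$ states.

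First, I would scale so that the depot and all terminals lie in an axis-aligned bounding box of side $L = O(n/\eps)$, apply a random $(a,b)$-shift to obtain a shifted quadtree dissection of depth $O(\log L)$, and place $m = O(\log L / \eps)$ equally-spaced portals on each side of every square. A standard extension of Arora's structure theorem, combined with the patching lemma applied separately to each of the $k$ tours, guarantees a $(1+\eps)$-approximate solution (in expectation over the random shift) in which every tour enters and exits each square only at portals, with at most $r = O(1/\eps)$ crossings per side; in particular, the total number of boundary crossings per square across all tours is $O(k/\eps) = O_\eps(1)$.

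Then I would run a DP over pairs $(S,\sigma)$, where $S$ is a dissection square and $\sigma$ records, for each tour $i \in \{1,\ldots,k\}$, (a) the multiset of portals on $\partial S$ used by tour $i$ together with their pairing into chords inside $S$, and (b) the total demand $d_i(S) \in \{0, \tfrac{1}{2n}, \tfrac{2}{2n}, \ldots, 1\}$ that tour $i$ collects from terminals lying in $S$. The DP value $A(S,\sigma)$ is the minimum length of a set of curves inside $S$ consistent with $\sigma$; the recursion merges the four child configurations by matching portal pairings and summing demands per tour, and at the root we require $d_i \le 1$ for every tour and that each terminal is visited exactly once by some tour. The portal/pairing component of $\sigma$ contributes $(\log L/\eps)^{O_\eps(1)}$ choices and the demand vector contributes $(2n+1)^k = n^{O_\eps(1)}$ choices, so the DP table has polynomial size and polynomial update time; averaging over shifts and then derandomizing in the standard way yields a deterministic $(1+\eps)$-approximation.

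The main obstacle is maintaining a globally consistent identity for the $k$ anonymous tours when merging configurations across the dissection, so that a chord belonging to tour $i$ in one child is matched to a chord belonging to the same tour $i$ in its neighbor. I would break the symmetry at the root by ordering the tours according to the clockwise index of their first portal crossing on the boundary of the root square that contains $O$, and at each merge step try all $k! = O_\eps(1)$ relabelings between children; this preserves correctness and adds only a constant factor to the running time. All remaining ingredients — portalization, patching, the random-shift analysis, and derandomization — transfer from Arora's original argument essentially without change.
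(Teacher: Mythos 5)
Your proposal follows essentially the same route as the paper's proof sketch (which defers the details to Asano et al.~and Arora): adapt the randomized shifted quadtree with portals and patching, use the $O_\eps(1)$ bound on the number of tours to cap the per-square portal crossings at $O_\eps(1)$, and use the $1/(2n)$-granularity of demands to enumerate accumulated demands in $\mathrm{poly}(n)$ states, yielding a polynomial-time dynamic program. The only cosmetic difference is that the paper tracks the guessed demand \emph{per subtour/chord} rather than per globally-labeled tour, which sidesteps your $k!$-relabeling device, but the two bookkeeping schemes are equivalent and both give the claimed bound.
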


\begin{proof}[Proof sketch of \cref{lem:Arora-modified}]
The proof is a straightforward adaptation of the algorithm of Asano et al.~\cite{asano1997covering} for the unit-demand case (i.e., for the case when all demands are equal) of the Euclidean CVRP, which in turn is a simple adaptation of Arora's algorithm~\cite{arora1998polynomial} for the Euclidean TSP.

Recall that Arora's algorithm defines a randomized hierarchical quadtree decomposition, such that a near-optimal solution intersects the boundary of each square only $O_\eps(1)$ times and those crossings happen at one of a small set of prespecified points, called \emph{portals}, and then uses a polynomial time dynamic program to find the best solution with this structure.

In \cite{asano1997covering} it was observed that, when the number of tours in $\OPT$ is $O_\eps(1)$, there is a near-optimal solution in which the overall number of subtours passing through each square (via portals) is $O_\eps(1)$. Furthermore, for all equal demands, one can guess the number of terminals covered by each such subtour within a polynomial number of options. This leads to a polynomial number of configurations of subtours inside each square, which ensures the polynomial running time of a natural dynamic program.

Similarly to \cite{asano1997covering}, when the demands are integer multiples of $\frac{1}{2n}$, one can guess the total demand covered by each subtour in the above sense. The details are left to the reader.
\end{proof}

\begin{proof}[Proof of \cref{thm:fewtours}]
From \cref{lem:Arora-modified}, we obtain in polynomial time a set of tours $\hat S$ covering the rounded demands of all terminals such that the total cost of the tours in $\hat S$ is at most $(1+\eps)\cdot \opt$.

First, we show that the solution $S$ constructed from $\hat S$ is feasible for the unsplittable Euclidean CVRP. 
Consider two tours $t_1$ and $t_2$ in $S$ corresponding to a tour $t\in \hat S$.
Let $v_1,\dots,v_\ell$ be the terminals covered by $t$ in non-increasing order of their unrounded demands.
From the construction, $t_1$ is within the tour capacity. It suffices to show that $t_2$ is within the tour capacity.
If $\sum_{j=1}^{\ell}\demand(v_j)\leq 1$, then $t_2$ covers no demand and is trivially within the tour capacity.
Next consider the case when $\sum_{j=1}^{\ell}\demand(v_j)> 1$.
Since the rounding creates an extra demand of at most $\frac{1}{2n}$ at each terminal and the number of terminals $\ell$ is at most $n$, we have 
$\sum_{j=1}^{\ell}\demand(v_j)\leq 1+\frac{\ell}{2n}\leq \frac{3}{2}$.
To show that $t_2$ is within the tour capacity, it suffices to show that the overall demand on $t_1$ is at least $\frac{1}{2}$.
There are two cases. 
\begin{description}
\item[Case 1: $\demand(v_1)\geq \frac{1}{2}$.] Then $\demand(t_1)\geq \demand(v_1)\geq \frac{1}{2}$.
\item[Case 2: $\demand(v_1)<\frac{1}{2}$.]
By the choice of $i$, we have $\sum_{j=1}^{i+1}\demand(v_j)>1$.
Since $\demand(v_{i+1})\leq \demand(v_1)<\frac{1}{2}$, we have $\demand(t_1)=\sum_{j=1}^{i}\demand(v_j)>\frac{1}{2}$.
\end{description}
So both $t_1$ and $t_2$ are within the tour capacity.
Thus the solution $S$ is feasible.  

For each tour $t\in \hat S$, each of the two corresponding tours $t_1$ and $t_2$ has cost at most the cost of $t$.
Thus $\cost(S)\leq 2\cdot \cost(\hat S)\leq 2(1+\eps)\cdot\opt$.

Hence Algorithm~\ref{alg:fewtours} returns a feasible solution to the unsplittable Euclidean CVRP with cost at most $2(1+\eps)\cdot \opt$. 
\end{proof}

\subsubsection*{Acknowledgement}
We thank Vincent Cohen-Addad for helpful preliminary discussions.

\bibliographystyle{alpha}

\bibliography{references}

\end{document}